\newlength{\Oldarrayrulewidth}
\newcommand{\Cline}[2]{%
  \noalign{\global\setlength{\Oldarrayrulewidth}{\arrayrulewidth}}%
  \noalign{\global\setlength{\arrayrulewidth}{#1}}\cline{#2}%
  \noalign{\global\setlength{\arrayrulewidth}{\Oldarrayrulewidth}}}
\numberwithin{equation}{section}
\newtheorem{theorem}{Theorem}
\numberwithin{theorem}{section}
\newtheorem{assumption}[theorem]{Assumption}
\newtheorem{corollary}[theorem]{Corollary}
\newtheorem{proposition}[theorem]{Proposition}
\newtheorem{remark}[theorem]{Remark}
\newenvironment{proof}[1][Proof]{\noindent\textbf{#1.} }{\ \rule{0.5em}{0.5em}}
\newdimen\dummy
\begin{document}

\title{Option Pricing in the Moderate Deviations Regime}

\author{Peter Friz \\ TU and WIAS Berlin \\ friz@math.tu-berlin.de
  \and Stefan Gerhold
   \\ TU Wien \\ sgerhold@fam.tuwien.ac.at
   \and Arpad Pinter\thanks{We gratefully acknowledge financial support from DFG grant FR2943/2 (P.~Friz)
resp.\ the
Austrian Science Fund (FWF) under grant P~24880 (S.~Gerhold, A.~Pinter). This work was first presented by one of us (Friz) at Global Derivates 2015 in Amsterdam, participants are thanked for their feedback.}
  \\ TU Wien \\ apinter@fam.tuwien.ac.at
}

\date{\today}

\maketitle

\begin{abstract}
We consider call option prices in diffusion models close to expiry, in an asymptotic regime (``moderately out of the money'') that interpolates between the well-studied cases of at-the-money options and out-of-the-money fixed-strike options. 
First and higher order small-time moderate deviation estimates of call prices
and implied volatility are obtained.
The expansions involve only simple expressions of the model parameters, and we show
in detail how to calculate them for generic local and stochastic volatility models.
Some numerical examples for the Heston model illustrate the accuracy of our results.
\end{abstract}


\section{Introduction}

Small-maturity approximations of option prices have been studied extensively in recent years. While out-of-the-money  calls with fixed strike and at-the-money calls have both been thoroughly investigated, there is a significant asymptotic regime lying between the two. 
It has received little attention, and, to the best of our knowledge, none at all in the classical diffusion case. The aim of the present paper is to fill this gap. The
``moderately out-of-the-money'' regime reflects the reality of quoted option prices
(strikes move closer to the money as expiry shrinks), while offering excellent analytic tractability.

To put our results into
perspective, we recall some well-known facts on option price approximations
close to expiry. We write $C=C(K,t)$ for the call price, and $c(k,t)$ if we wish to express it
as a function of log-moneyness~$k$: 
\begin{equation}\label{eq:not}
  C(S_0e^k,t)/S_0 = c(k,t).
\end{equation}
We start with the {\bf at-the-money} (short: {\bf ATM}) regime $k=0$. In the Black-Scholes model, writing $c(k,t) = c_{\,\mathrm{BS}}( 0,t;\sigma )$ with volatility parameter $\sigma>0$,
we have the following ATM call price behaviour,
\begin{equation*}
  c_{\,\mathrm{BS}}( 0,t;\sigma ) \sim \frac{\sigma \sqrt{t}}{\sqrt{2\pi }}, \quad
  t\downarrow 0.
\end{equation*}
The same is actually true~\cite{MuNu11} in a generic semimartingale model with
diffusive component (with spot volatility $\sigma_0=\sqrt{v_{0}}>0$):
\begin{equation}\label{eq:atm}
  c( 0,t) \sim \frac{\sigma_0\sqrt{t}}{\sqrt{2\pi }},\quad
t\downarrow 0,
\end{equation}%
and this translates to the generic ATM implied volatility formula (even in presence
of jumps, as long as $v_{0}>0$)
\begin{equation*}
\sigma_{\mathrm{imp}}^{2}( 0,t) =v_{0}+o( 1), \quad 
t\downarrow 0.
\end{equation*}%
Higher order terms in $t$ will be model dependent. For instance, in the Heston case, with variance dynamics $dV=-\kappa \left( V-%
\bar{v}\right) dt+\eta \sqrt{V}dW$,
implied volatility has the following ATM expansion:
\begin{align}
\sigma_{\mathrm{imp}}^{2}( 0,t)  &= v_{0}+a( 0) t+o(
t), \label{eq:iv Heston} \\
a( 0)  &=-\frac{\eta ^{2}}{12}\left( 1-\frac{\rho ^{2}}{4}%
\right) +\frac{v_{0}\rho \eta }{4}+\frac{\kappa }{2}\left( \bar{v}%
-v_{0}\right). \notag
\end{align}%
This is Corollary~4.4 in~\cite{FoJaLe12},
and we note that $a( 0) $ has no easy interpretation in terms of
the model parameters.

Relaxing $k=0$ to $k=o(\sqrt{t})$ amounts to what we dub 
``{\bf almost-ATM}''  (short: {\bf AATM}) regime.\footnote{The term``almost-ATM" seems new, but this regime was considered by a number of authors including \cite{CaCo14,MuNu11}.}
 (In particular, $k \approx t^\beta$ is in the AATM regime if and only if $\beta>1/2$.)
Again for generic semimartingale models
with diffusive component and spot volatility $\sigma_0>0$, it is easy to see
\cite{CaCo14,MuNu11} that the ATM asymptotics~\eqref{eq:atm}  imply the
following almost-ATM  asymptotics:
\begin{equation*}
  c( k_{t},t) \sim
   \frac{\sigma_0\sqrt{t}}{\sqrt{2\pi }}, \quad k_t = o(\sqrt{t}),\ t\downarrow0.
\end{equation*}%
This fails when $k_{t}$ ceases to be $o( \sqrt{t}) $. Indeed, for 
$k_{t}=\theta \sqrt{t}$ with constant factor $\theta>0$,
we have~\cite{CaCo14,MuNu11}:
\begin{equation*}
c( k_{t},t ) \sim \mathbb{E}[N( -\theta ,\sigma_0^{2}) ^{+}]\sqrt{t}.
\end{equation*}%
This, too, holds true in the stated semimartingale generality. In any case,
the proof is based on the L\'{e}vy case with non-zero diffusity $v_{0}$, and
the result follows from comparison results which imply that the difference
is negligible to first order. For a thorough discussion of the regime
$k=O(\sqrt{t})$ in the (local) diffusion case, see~\cite{PaPa16}. \\

Beyond this regime, call price asymptotics change considerably. For instance, take an additional
slowly diverging factor $\log ( 1/t) $,%
\begin{equation*}
k_{t}=\theta \sqrt{t\log \left( 1/t\right) }.
\end{equation*}%
Even in the Black-Scholes model, we now loose the $\sqrt{t}$-behaviour of call prices seen above and in fact
\begin{equation*}
c_{\,\mathrm{BS}}( 0,t;\sigma ) =t^{\frac{1}{2}+\frac{\theta ^{2}}{2\sigma
^{2}}}\ell( t),
\end{equation*}
for some slowly
varying function $\ell( t)$, see~\cite{MiTa12}.
On the other hand, in a genuine 
{\bf out-of-the-money} (short: {\bf OTM}) situation, with $k_{t}\equiv k>0$ fixed, option
values are exponentially small in diffusion models, and we are in the realm of {\it large deviation
theory}. For instance,
\begin{equation*}
c_{\,\mathrm{BS}}\left( k,t;\sigma \right) \approx \exp \left( - \frac{\Lambda_{\mathrm{BS}}(k)}{t} \right), \quad k>0\ \text{fixed},\ t\downarrow0,
\end{equation*}%
with $\Lambda_{\mathrm{BS}}( k) =\frac{1}{2}k^{2}/\sigma ^{2}$ in the Black-Scholes
model.\footnote{More precisely, $t\log c_{\,\mathrm{BS}}\left( k,t;\sigma \right) \sim -\Lambda_{\mathrm{BS}}(k), \quad k>0\ \text{fixed},\ t\downarrow0$.}
Similar results appear in the literature, with different level of
mathematical rigor, for other and/or generic diffusion
models~\cite{BeBuFl02,CaWu03,FoJa09,Pa15}. 

\begin{table}[]
\caption{Asymptotic behavior of short-maturity call options, $t \downarrow 0$}
\begin{tabular}{|l|l|l|l|l|}
\hline
\Cline{2pt}{4-4}
\textbf{Process Type}                                                                                                
& \textbf{\begin{tabular}[c]{@{}l@{}} ATM\\ {\footnotesize (at-the-money)} \\ \\ $K = S_0$ \end{tabular}} 
& \textbf{\begin{tabular}[c]{@{}l@{}} AATM {\footnotesize (almost}\\ {\footnotesize at-the-money)} \\ \\ $\log \tfrac{K}{S_0}  \approx (const)t^\beta $ \\ $\beta > 1/2$ \end{tabular}} 
& \multicolumn{1}{!{\vrule width 2pt}c!{\vrule width 2pt}}{\textbf{\begin{tabular}[c]{@{}l@{}} MOTM {\footnotesize (moderately}\\ {\footnotesize out-of-the-money)} \\ \\ $\log \tfrac{K}{S_0}  \approx (const)t^\beta $ \\ $0 < \beta < 1/2$ \end{tabular}}}
& \textbf{\begin{tabular}[c]{@{}l@{}} OTM\\ {\footnotesize (out-of-the-money)}\\ \\ $\log(K/S_0) \equiv k > 0$ \end{tabular}} \\ 

\hline
\textbf{Black-Scholes}                                                                                                & \begin{tabular}[c]{@{}l@{}} $O(\sqrt{t})$, \\ elementary\end{tabular}
& \begin{tabular}[c]{@{}l@{}} $O(\sqrt{t})$, \\ elementary\end{tabular}             & \multicolumn{1}{!{\vrule width 2pt}c!{\vrule width 2pt}}{$\exp\big(-\frac{const}{t^{1-2\beta}}\big)$}
& \begin{tabular}[c]{@{}l@{}} $\exp\big(-\frac{const}{t}\big)$, \\ elementary\end{tabular} \\ 

\hline
\textbf{\begin{tabular}[c]{@{}l@{}}Stochastic volatility \\ (diffusion model)\end{tabular}}
& \begin{tabular}[c]{@{}l@{}}$O(\sqrt{t})$, \\ \cite{CaWu03,MuNu11}\end{tabular} 
& \begin{tabular}[c]{@{}l@{}}$O(\sqrt{t})$, \\ \cite{MuNu11}\end{tabular}                                                               & \multicolumn{1}{!{\vrule width 2pt}c!{\vrule width 2pt}}{$\exp\big(-\frac{const}{t^{1-2\beta}}\big)$}
& \begin{tabular}[c]{@{}l@{}}$\exp\big(-\frac{const}{t}\big)$, \\ \cite{BeBuFl04}\end{tabular} \\ 

\hline
\Cline{2pt}{4-4}

\textbf{\begin{tabular}[c]{@{}l@{}}Jump diffusion/ \\ general semi-\\ martingale with\\ diff. component\end{tabular}}
& \begin{tabular}[c]{@{}l@{}}$O(\sqrt{t})$, \\ \cite{MuNu11}\end{tabular} 
& \begin{tabular}[c]{@{}l@{}}$O(\sqrt{t})$, \\ \cite{MuNu11}\end{tabular}
& \begin{tabular}[c]{@{}l@{}}$O(t)$ in L\'evy models,\\ \cite{MiTa12}\end{tabular}                                                                                              & \begin{tabular}[c]{@{}l@{}}$O(t)$, \\ \cite{BeCo12,CaWu03}\end{tabular}                                                            \\ 
\hline
\end{tabular}
\end{table}

\vspace{0.4cm}
Throughout the paper, we reserve the term out-of-the-money (OTM) for \emph{fixed}
OTM log-strike~$k>0$, to distinguish this regime from the {\it moderately} out-of-the-money regime that we now introduce. Our basic observation is that for 
\begin{equation}\label{eq:korg}
k \approx (const)t^\beta 
\end{equation}
the cases of $\beta  >\tfrac12$, resp.\ $\beta = 0$, are covered by the afore-discussed AATM, resp.\ OTM, results. This leaves
open a significant gap, namely $\beta \in (0,\tfrac{1}{2})$, which we call {\bf moderately out-of-the-money} (short: {\bf MOTM}). We have a threefold interest in this MOTM  regime,
\begin{equation}\label{eq:k}
    k \approx (const)t^\beta \quad   \text{ for }\ \beta \in ( 0,\tfrac{1}{2}).
\end{equation}
\, \, \, (i) First, it is very much related to the {\it reality of quoted (short-dated) option prices}, where strikes of option price data with acceptable bid-ask spreads
tend to accumulate ``around the money", as illustrated in Figure \ref{fig1}. It is then very natural to analyse strikes $k = O(t^\beta)$ for some $\beta > 0$, and there is no reason why quoted strikes should always be almost-ATM, which effectively means an extreme concentration around the money thanks to $\beta > 1/2$.

\begin{figure}[ht]
	\centering
  \includegraphics[width=0.9\linewidth]{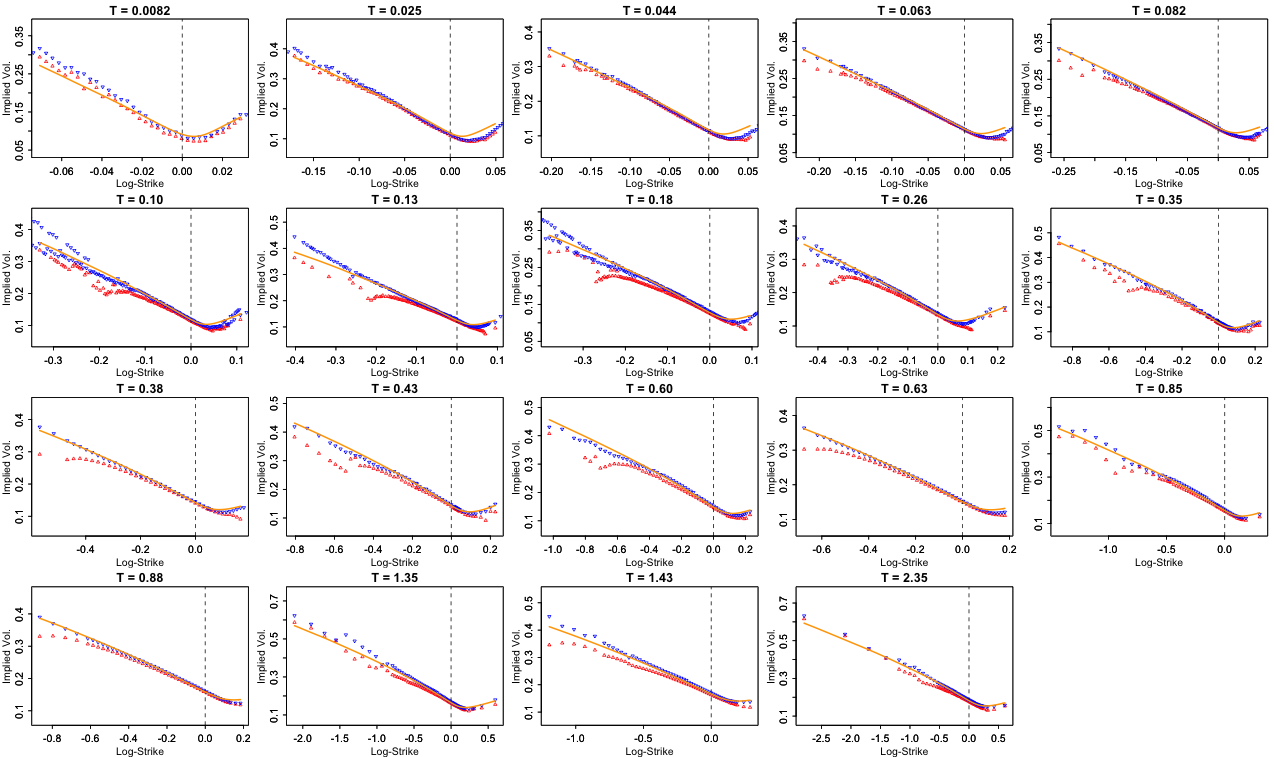} 
	\caption{SPX volatility smiles as of 14-Aug-2013 (courtesy of J. Gatheral). Strikes of options with small remaining time to maturity ($T=0.0082$) are about $e^{0.02}-1 \approx 2 \%$ around the money (spot), good data for a later 
	time $T=0.26$ already has already a range of $\approx 30 \%$,  the highest maturity $T=2.35$ has a range of $\approx 65 \%$ around the money. }
	\label{fig1}
\end{figure}

(ii) The second reason is {\it mathematical convenience}. In contrast to the genuine OTM regime (large deviation regime)  in which the rate function $\Lambda(k)$ is notoriously
difficult to analyse -- often related to geodesic distance problems -- MOTM naturally comes with a quadratic rate function and, most remarkably, higher order expansions
are always explicitly computable in terms of the model parameters. The terminology {\it moderately}-OTM (MOTM) is in fact in reference to {\it moderate deviations theory}, which effectively
interpolates between the central limit and large deviations regimes. (This also identifies the AATM regime as central limit regime, where asymptotics are precisely those of the Black-Scholes model,
which in turn is the rescaled Gaussian (in log-coordinates) limit of a general semimartingale model with diffusive component.)

(iii) Our third point is that MOTM expansions naturally involve quantities very familiar to practitioners, notably spot (implied) volatility, implied volatility skew and so on. 


%
%
%
%
%

In the Black-Scholes model, it is easy to check that
we have the following MOTM asymptotics,
\begin{equation*}
c_{\,\mathrm{BS}}( k_{t} \equiv \theta t^\beta, \, t;\sigma ) = \exp \bigg( -\frac{\theta
^{2}}{2\sigma ^{2}t^{1-2\beta }} \big(1+o(1)\big)\bigg), \quad t\downarrow 0.
\end{equation*}%
Loosely speaking, our main results assert that 
such relations (even of higher order) are true in great
generality for diffusion models, all quantities are computable and then related to 
implied volatility expansions. 
We note in passing that, for L\'evy models,
the regime \eqref{eq:k} has been studied in~\cite{MiTa12}; then,
call prices decay algebraically rather than exponentially. For recent related
results on \emph{fractional} stochastic volatility models, see~\cite{FoZh16,GuJaRo14}.
Guillin's work~\cite{Gu03} on small-noise moderate deviations of diffusions should
also be mentioned here; however, there the dynamics depend on a ``fast'' random environment
(with motivation from physics, and no obvious financial interpretation), and also the non-degeneracy assumption (D) from~\cite{Gu03} is not satisfied in our context.
\\

The rest of the paper is organized as follows. Section~\ref{se:dens} contains our main results,
which translate asymptotics for the transition density of the underlying into MOTM call price asymptotics.
The corresponding proofs are presented in Section~\ref{se:proofs}.
Section~\ref{se:iv} gives the implied volatility expansion resulting from our call price approximations.
Section~\ref{se:examples} applies our main results to standard examples, namely
generic local volatility models (Subsection~\ref{se:loc}), generic stochastic volatility
models (Subsection~\ref{se:stoch vol}), and the Heston model (Subsection~\ref{se:heston}). 
(As usual, the square-root degeneracy of the Heston model makes it difficult to apply results for
general stochastic volatility models, so we verify the validity of our results -- if formally applied to 
Heston -- by a direct ``affine'' analysis.) Finally, in Section~\ref{se:ge} we present a second approach at MOTM estimates, 
which employs the G\"artner-Ellis theorem from large deviation theory. Throughout we take zero rates, natural in view of our 
short-time consideration. Also, w.l.o.g.\ we normalize spot to $S_0=1$.

\section{MOTM option prices via density asymptotics}\label{se:dens}

We consider a {\it general stochastic volatility} model, i.e.\ a positive martingale $(S_t)_{t \ge 0}$ with dynamics
$$  
      dS_t = S_t \sigma (t, \omega)    dW_t
$$
started (w.l.o.g.) at $S_0 = 1$. We assume that the stochastic volatility (process) $\sigma$ itself is an It\^o-diffusion,
started at some deterministic value $\sigma (0, \omega) \equiv \sigma_0$, called {\it spot volatility}.
Recall that in any such stochastic volatility model, the local (or effective) volatility is 
defined by
$$
      \sigma^2_{\mathrm{loc}} (t,K) := \mathbb{E}[ \sigma^2 (t,\omega) | S_t = K ]. 
$$      
As is well-known, the equivalent local volatility model 
$$ 
      d\tilde S_t = \tilde S_t  \sigma^2_{\mathrm{loc}} (t,\tilde S_t) dW
$$      
has the property that $\tilde S_t = S_t$ (in law) for all fixed times. See~\cite{BrSh13} for some precise technical conditions under which this holds true.\footnote{The situation is very different with jumps, see~\cite{FrGeYo14}.}
In particular then, European option prices $C = C(K,t)$ match in both models. Recall also Dupire's formula in this context:
\begin{equation}\label{eq:dup}
\sigma _{\mathrm{loc}}^{2}( K,t) =\frac{\partial _{t}C( K,t)}{\frac{1}{2}%
K^{2}\partial _{KK}C( K,t)}, \quad t>0, K>0.
\end{equation}
We now state our two crucial conditions.

\begin{assumption}\label{ass:dens}
   For all $t>0$, $S_t$ has a continuous pdf $K\mapsto q(K,t)$, which behaves asymptotically as follows for small time:
\begin{equation}\label{eq:dens}
q( K,t) \sim e^{-\frac{\Lambda (k) }{t}}{t^{-1/2}}%
\gamma(k), \quad t\downarrow 0,
\end{equation}
uniformly for~$K=e^k$ in some neighbourhood of~$S_{0}=1$.
The energy function~$\Lambda$ is smooth in some neighbourhood of zero, with
$\Lambda(0)=\Lambda'(0)=0$.
Moreover, $\lim_{k\to 0}\gamma(k)=\gamma(0)>0$.
\end{assumption}

\begin{assumption}\label{ass:loc}
  For $t\downarrow0$ and $K\to S_0=1$, the local volatility function of~$(S_t)_{t\geq0}$
  converges to spot volatility:
  \[
    \lim_{\substack{K\to S_0 \\ t\downarrow0}} \sigma _{\mathrm{loc}}( K,t)  = \sigma_0.
  \]
\end{assumption}
The latter assumption is (in diffusion models) fairly harmless. The first assumption is potentially (very) difficult to check, but fortunately we can
rely on substantial recent progress in this direction \cite{DeFrJaVi14a, DeFrJaVi14b, Os15}.  We shall see in Section~\ref{se:stoch vol}  in detail 
that both assumptions indeed hold in generic stochastic volatility models. Let us also note the fundamental relation between spot-volatility $\sigma_0$ (actually equal
to implied spot vol $\sigma_{\mathrm{imp}} (0,0)$ here) and the Hessian of the energy function $\Lambda = \Lambda(k)$,
$$
   \sigma_0   =\Lambda''(0)^{-1/2}.
$$
(This is well-known (e.g. \cite{Du04}) and also follows from Proposition \ref{prop:osa} below.)
Now we state our main result.
We slightly generalize the log-strikes considered in~\eqref{eq:k},
replacing the constant 
by an arbitrary slowly varying function~$\ell$.

\begin{theorem}\label{thm:main}
Under Assumptions~\ref{ass:dens} and~\ref{ass:loc},
consider a moderately out-of-the-money call, in the sense that log-strike is
\begin{equation}\label{eq:k ell}
  k_{t}=t^\beta \ell(t), \quad t>0,
\end{equation}
where $\ell>0$ varies slowly at zero and $\beta\in(0,\tfrac12)$.

(i) The call price satisfies the moderate deviation estimate
\begin{align}
c( k_{t},t) &= \exp \bigg( -\frac{\Lambda ^{\prime
\prime }( 0) }{2}\frac{k_{t}^{2}}{t} \big(1+o(1)\big) \bigg) \notag \\
&= \exp \bigg( -\frac{1 }{2\sigma_0^2}\frac{k_{t}^{2}}{t} \big(1+o(1)\big)\bigg), \quad
t\downarrow 0. \label{eq:i}
\end{align}

(ii) If we restrict $\beta$ to $(0,\tfrac13)$, then the following
moderate second order expansion holds true:  

\begin{align}  
c( k_{t},t) &= \exp\bigg( - \frac{1}{2}\Lambda ^{\prime \prime }(
0) \frac{k_{t}^{2}}{t}- \frac{1}{6}\Lambda ^{\prime \prime \prime
}( 0) \frac{k_{t}^{3}}{t}\big(1+o( 1) \big)\bigg) \notag \\
&= \exp \bigg( -\frac{1 }{2\sigma_0^2}\frac{k_{t}^{2}}{t} \big(1- \frac{\mathcal{S}}{\sigma_0^2} k_t (1+o(1)) \big)\bigg), \quad
t\downarrow 0, \label{eq:md2}
\end{align}
with spot-variance $\sigma_0^2$, equal to $\sigma^2_{\mathrm{imp}}(0,0)$, and implied variance skew $\mathcal{S} = \frac{\partial }{\partial k}\big|_{k=0} \sigma_{\mathrm{imp}}^{2}(
  k,0).$

%
%
\end{theorem}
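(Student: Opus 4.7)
The plan is to turn the density asymptotic of Assumption~\ref{ass:dens} into a call-price asymptotic by integrating Dupire's formula \eqref{eq:dup} in time. Combining $\partial_{KK}C(K,t)=q(K,t)$ with $C(e^{k_t},0)=0$ for $k_t>0$, I would write
\begin{equation*}
  c(k_t,t) = \int_0^t \tfrac{1}{2}\,e^{2k_t}\,\sigma_{\mathrm{loc}}^{2}(e^{k_t},s)\,q(e^{k_t},s)\,ds,
\end{equation*}
so only the local behaviour of $q$ near $K=S_0=1$ is needed. Since $k_t\to 0$, Assumption~\ref{ass:loc} replaces $\sigma_{\mathrm{loc}}^{2}(e^{k_t},s)$ by $\sigma_0^2(1+o(1))$ uniformly in $s\in(0,t)$, and the uniform form of \eqref{eq:dens} near $k=0$ gives $q(e^{k_t},s) = s^{-1/2}\gamma(k_t)e^{-\Lambda(k_t)/s}(1+o(1))$, again uniformly in $s$.

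The central object is then the Laplace-type integral
\begin{equation*}
  I(t) := \int_0^t s^{-1/2}\,e^{-\Lambda(k_t)/s}\,ds.
\end{equation*}
In the MOTM regime $\Lambda(k_t)/t\to\infty$, so the substitution $u=\Lambda(k_t)/s$ turns $I(t)$ into $\Lambda(k_t)^{1/2}\,\Gamma(-\tfrac{1}{2},\Lambda(k_t)/t)$, and the standard incomplete-gamma asymptotic $\Gamma(-\tfrac{1}{2},A)\sim A^{-3/2}e^{-A}$ yields $I(t)\sim t^{3/2}\Lambda(k_t)^{-1}e^{-\Lambda(k_t)/t}$. Assembling the pieces produces
\begin{equation*}
  c(k_t,t) \sim \tfrac{1}{2}\,\sigma_0^2\,\gamma(0)\,\frac{t^{3/2}}{\Lambda(k_t)}\,e^{-\Lambda(k_t)/t},
\end{equation*}
so on the log scale $\log c(k_t,t) = -\Lambda(k_t)/t + O(|\log t|)$ as $t\downarrow 0$.

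Both displays then fall out by Taylor expanding and comparing scales. Using $\Lambda(0)=\Lambda'(0)=0$,
\begin{equation*}
  \frac{\Lambda(k_t)}{t} = \frac{\Lambda''(0)}{2}\,\frac{k_t^2}{t} + \frac{\Lambda'''(0)}{6}\,\frac{k_t^3}{t} + O\!\left(\frac{k_t^4}{t}\right).
\end{equation*}
For (i) with $\beta<\tfrac{1}{2}$ the quadratic term is of order $t^{2\beta-1}\to\infty$, dominating the cubic correction (smaller by $k_t\to 0$) and the logarithmic remainders $O(|\log t|)=o(t^{2\beta-1})$, which gives \eqref{eq:i}. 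For (ii) with $\beta<\tfrac{1}{3}$, the cubic contribution of order $t^{3\beta-1}$ still beats $O(|\log t|)$ exactly because $3\beta-1<0$; the quartic remainder $O(t^{4\beta-1})$ is absorbed into the $o(1)$ factor multiplying $\tfrac{1}{6}\Lambda'''(0)k_t^3/t$, yielding \eqref{eq:md2}. The second forms in each line come from $\sigma_0^2=1/\Lambda''(0)$ together with the identification $\mathcal{S}=-\tfrac{1}{3}\Lambda'''(0)/\Lambda''(0)^2$ of the implied variance skew, which follows from Proposition~\ref{prop:osa}.

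The main obstacle is to apply \eqref{eq:dens} uniformly across the integration range $s\in(0,t)$, rather than only at $s=t$. I would handle this by splitting the integral at some $s=\delta t$: on $(0,\delta t)$ the integrand is exponentially suppressed relative to its $s=t$ value by at least $\exp\!\bigl(-\Lambda(k_t)(\tfrac{1}{\delta t}-\tfrac{1}{t})\bigr)$ and can be discarded without invoking \eqref{eq:dens} there, while on $[\delta t,t]$ the simultaneous limits $s\downarrow 0$ and $k_t\to 0$ place us inside the regime where the uniform $(1+o(1))$ factors from Assumption~\ref{ass:dens} and Assumption~\ref{ass:loc} propagate cleanly through the Laplace estimate of $I(t)$.
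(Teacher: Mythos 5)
Your argument matches the paper's: both integrate Dupire's formula in time, insert the uniform density asymptotics, evaluate the resulting Laplace-type integral (you via the incomplete gamma function, the paper via Laplace's method after substituting $s=xt$), and conclude by Taylor-expanding $\Lambda$ against the logarithmic remainder. The split at $s=\delta t$ you propose as a safety net is superfluous, since the uniformity built into Assumption~\ref{ass:dens} already makes the $1+o(1)$ error uniform over $s\in(0,t]$ as $t\downarrow0$; also, the skew identity $\mathcal{S}=-\tfrac13\Lambda'''(0)/\Lambda''(0)^2$ is Theorem~\ref{thm:skew} (via the BBF formula), not Proposition~\ref{prop:osa}.
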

In particular, for $\ell\equiv\theta>0$, we have the (first order) expansion
\[
  t^{1-2\beta}\log c(\theta t^\beta,t) \sim -\frac{\theta^2}{2\sigma_0^2}, \quad
   t\downarrow 0,
\]
exhibiting a quadratic rate function $\theta\mapsto  \theta^2/2\sigma_0^2,$
typical of {\it moderate deviation problems.}\footnote{Recall that the MD rate function for a centered i.i.d.\ sequence $(X_n)_{n\geq1}$
is given by $\theta \mapsto \theta^2/(2 \mathrm{Var}(X_1))$. This is the
``moderate'' version of Cram\'er's theorem, see Theorem~3.7.1 in~\cite{DeZe98}.}\\


The quantities $ \Lambda ^{\prime \prime }( 0) ,\Lambda ^{\prime
\prime \prime }( 0),\dots $ appearing above are {\it always computable} from the initial values and the diffusion coefficients of the stochastic volatility model. This is in stark contrast to the OTM regime, where one needs $\Lambda(k)$, which is in general not available in closed form (with some famous exceptions, like the SABR model).
We quote the following result on $N$-factor models from  Osajima~\cite{Os15}, and refer to Section~\ref{se:stoch vol}
for detailed calculations in a two-factor stochastic volatility model.

\begin{proposition}\label{prop:osa}
 Assume that $(\log S, \sigma^1, \dots, \sigma^{N-1})$ is Markov, started at $(0, \bar\sigma_0)$ with $\bar\sigma_0\in\mathbb{R}^{N-1}$ and $\bar\sigma_0^1>0$, with stochastic volatility $\sigma \equiv \sigma^1$, where the generator has (non-degenerate) principal part $\sum a^{ij} \partial_{ij}$ in the sense that $a^{-1}$ defines a Riemannian metric. Then

\[
  \Lambda(k) = \frac1{2b_1}k^2 - \frac{b_2}{3b_1^3}k^3 + \bigg(-\frac{b_3}{4b_1^4} + \frac{b_2^2}{2b_1^5}\bigg)k^4 + O(k^5),\quad k\to 0,
\]
where the coefficients are given by
\begin{align*}
  b_1 &= \int_0^1 a^{11}(t,\bar\sigma_0)dt \\
  b_2 &= \frac32\int_0^1 (Va^{11})(t,\bar\sigma_0)dt \\
  b_3 &= 2\int_0^1 (V^2a^{11})(t,\bar\sigma_0)dt + \frac12\int_0^1\Gamma(a^{11}, a^{11})(t,\bar\sigma_0)dt,
\end{align*}
using the functions
\begin{align*}
  (Vf)(t,x) &= \sum_{i=1}^N a^{1i}(t,x)\int_t^1 \frac{\partial f}{\partial x_i}(s, x)ds, \\
  \Gamma(f,g)(t,x) &= \sum_{i,j=1}^N a^{ij}(t,x)\bigg(\int_t^1\frac{\partial f}{\partial x_i}(s, x)ds\bigg)\bigg(\int_t^1\frac{\partial g}{\partial x_j}(s, x)ds\bigg).
\end{align*}
\end{proposition}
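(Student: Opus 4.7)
The plan is to exploit the variational (geodesic) characterisation of $\Lambda$ that underlies Assumption~\ref{ass:dens} for uniformly elliptic diffusions: modulo factors that are absorbed into the scaling, one has
\[
  \Lambda(k) = \inf_{\gamma}\, \tfrac12 \int_0^1 \dot\gamma(s)^\top (a^{-1})(s,\gamma(s))\dot\gamma(s)\,ds,
\]
where the infimum is taken over absolutely continuous paths $\gamma:[0,1]\to\mathbb{R}^N$ with $\gamma(0)=(0,\bar\sigma_0)$ and first coordinate $\gamma_1(1)=k$, all other end-point coordinates being free. At $k=0$ the infimum is attained by the constant path, so $\Lambda(0)=\Lambda'(0)=0$; smoothness and non-degeneracy of the Hamiltonian flow associated with $\tfrac12\sum a^{ij}p_ip_j$ imply the existence, for small $|k|$, of a unique minimiser $\gamma_k$ depending smoothly on $k$.

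The first step is to Taylor expand the minimiser,
\[
  \gamma_k(s) = \bigl(k\phi(s) + k^2\phi_2(s) + \cdots,\; \bar\sigma_0 + k\psi(s) + k^2\chi(s) + \cdots\bigr),
\]
with $\phi(0)=0,\ \phi(1)=1$ and $\psi(0)=\chi(0)=\phi_j(0)=0,\ \phi_j(1)=0$, and to derive the perturbative Euler--Lagrange equations order by order. At leading order the metric is frozen at $(s,\bar\sigma_0)$ and the problem decouples: the $\phi$-component solves a one-dimensional variational problem whose minimiser satisfies $a^{11}(s,\bar\sigma_0)\dot\phi(s)\equiv \text{const}$, while the remaining free-endpoint components are most cleanly handled via their conjugate momenta. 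Substituting back yields the leading $k^2$-coefficient $1/(2b_1)$ with $b_1=\int_0^1 a^{11}(s,\bar\sigma_0)\,ds$, which also recovers the identification $\sigma_0=\Lambda''(0)^{-1/2}$ announced before the theorem.

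The higher-order coefficients are then produced by inserting the expansion of $\gamma_k$ into the Taylor expansion of $(a^{-1})(s,\bar\sigma_0+\cdot)$ about $\bar\sigma_0$, collecting powers of $k$, and using integration by parts to rewrite $\int_0^s\dot\psi(u)\,du$-type expressions as integrals of the form $\int_t^1 \partial_i f(s,\bar\sigma_0)\,ds$. This last step is exactly why the operators $V$ and $\Gamma$ in the statement take their particular form: $V$ encodes the first-order feedback of the time-integrated deviations of the auxiliary coordinates into the metric entry $a^{11}$, while $\Gamma$ arises when such first-order feedbacks are squared against the full metric $a^{ij}$. Reading off the coefficients of $k^2,k^3,k^4$ gives the announced formulas for $b_1,b_2,b_3$ modulo $O(k^5)$.

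The main obstacle I expect is the bookkeeping at order~$k^4$, where $b_3$ combines a genuine second-order perturbation of the metric along the geodesic (the $V^2 a^{11}$ term) with a quadratic cross-term arising from the first-order perturbation (the $\Gamma(a^{11},a^{11})$ term); the factors $2$ and $\tfrac12$ in the formula for $b_3$ reflect, respectively, these two distinct sources, and keeping the combinatorics straight without relying on a specific time-independent simplification is the delicate point. A fully detailed derivation along exactly these lines is carried out in Osajima~\cite{Os15}, whom we follow.
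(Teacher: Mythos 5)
The paper's own proof of this proposition is a single-line citation to Osajima, Theorem~1(1), with $T=1$; it contains no derivation. Your sketch therefore does more than the paper, and the overall plan---characterise $\Lambda$ as a free-endpoint geodesic energy with respect to the metric $a^{-1}$ and expand the minimiser perturbatively in $k$---is a legitimate route to the stated formula; your reading of $V$ as the linear response of $a^{11}$ to the displaced auxiliary coordinates and of $\Gamma$ as the quadratic cross-term is the right picture. Since you too ultimately defer to Osajima for the $k^3$- and $k^4$-bookkeeping, the proofs are at the same level of detail.

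However, the leading-order Euler--Lagrange claim in your sketch is inverted and needs to be fixed. With metric $a^{-1}$, eliminating the free-endpoint coordinates via the Schur complement of $a^{-1}$ leaves the effective one-dimensional Lagrangian $\tfrac12\big(a^{11}(s,\bar\sigma_0)\big)^{-1}\dot\phi(s)^2$, whose conserved momentum is $p_1=(a^{11})^{-1}\dot\phi$; equivalently, in the Hamiltonian picture $p_1$ is constant, the $p_j$ ($j\ge2$) vanish, and $\dot\gamma_1=a^{11}p_1$. Thus $\dot\phi\propto a^{11}(s,\bar\sigma_0)$, not $a^{11}\dot\phi\equiv\text{const}$ as you wrote. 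Substituting the correct $\dot\phi=a^{11}/\int_0^1 a^{11}\,ds$ does give $\Lambda(k)\approx k^2/(2b_1)$ with $b_1=\int_0^1 a^{11}\,ds$, whereas your relation would give $\tfrac12\big(\int_0^1(a^{11})^{-1}ds\big)^{-1}k^2$, which agrees with $k^2/(2b_1)$ only when $a^{11}(\cdot,\bar\sigma_0)$ is constant in $s$. This is not cosmetic: the coefficients $b_2$ and $b_3$ are obtained by evaluating the Taylor expansion of the metric along the leading-order geodesic, so the wrong zeroth-order path would corrupt the higher-order terms. Once the conserved quantity is corrected---and once one notes, via Fubini, that the backward integrals $\int_t^1\partial_i f\,ds$ appearing in $V$ and $\Gamma$ are equivalent to the forward displacements $\int_0^s a^{1i}\,dt$ that actually appear in the perturbed geodesic---your outline is a sound route to Osajima's coefficients, though the full $k^4$ bookkeeping still has to be carried out rather than cited.
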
 

\begin{proof}
  See Osajima~\cite{Os15}, Theorem 1(1), with $T=1$.
\end{proof}
\medskip

The following result presents a higher-order expansion in the MOTM regime. It yields
an asymptotically equivalent expression for call prices (and not just logarithmic asymptotics).

\begin{theorem}\label{thm:higher}
  Under the assumptions of Theorem~\ref{thm:main}, we set $k_t=t^\beta \ell(t)$
  with $\beta\in(0,\tfrac12)$ and $\ell>0$ slowly varying at zero, as above.
  Then the logarithm of the
   call price has the following refined MOTM expansion:
  \begin{multline}\label{eq:refined}
    \log c(k_t,t) = - \sum_{m=2}^{\lfloor 1/\beta \rfloor} \frac{\Lambda^{(m)}(0)}{m!}
      \frac{k_t^m}{t} \\
      + \Big(2\beta-\frac32\Big) \log\frac1t
      - 2 \log \ell(t) + \log \big(\gamma(0)v_0^2\big) + o(1), \quad t\downarrow0.
  \end{multline}
  Equivalently,
  \[
    c(k_t,t) \sim \gamma(0) v_0^2 \frac{t^{3/2-2\beta}}{\ell(t)^2}
      \exp\bigg(- \sum_{m=2}^{\lfloor 1/\beta \rfloor} 
      \frac{\Lambda^{(m)}(0)}{m!}\frac{k_t^m}{t} \bigg), \quad t\downarrow0.
  \]
\end{theorem}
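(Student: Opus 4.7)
The plan is to obtain an asymptotic \emph{equivalent} (not just log-asymptotics as in Theorem \ref{thm:main}) by performing a Laplace-type analysis at the left endpoint of the integral
\[
  c(k_t,t) = \int_{k_t}^\infty (e^k-e^{k_t})\, e^k\, q(e^k,t)\, dk,
\]
obtained from $c(k_t,t) = \mathbb{E}[(S_t-e^{k_t})^+]$ via the change of variables $K = e^k$. The exponent $\Lambda(k)/t$ in the density asymptotic $q(e^k,t)\sim t^{-1/2}\gamma(k) e^{-\Lambda(k)/t}$ is increasing near $k=k_t>0$, so the mass of the integrand concentrates at the boundary $k=k_t$, with an effective scale $u = k-k_t \sim t/\Lambda'(k_t) \sim v_0 t/k_t = v_0 t^{1-\beta}/\ell(t)$, which vanishes as $t\downarrow0$.

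Concretely, I first fix a small $\delta>0$ inside the neighborhood on which Assumption \ref{ass:dens} is uniform and split
\[
  c(k_t,t) = I_{\mathrm{near}}(t) + I_{\mathrm{far}}(t),\qquad
  I_{\mathrm{near}} = \int_{k_t}^{k_t+\delta}(\cdot)\,dk.
\]
On $I_{\mathrm{near}}$ substitute $k = k_t+u$ and apply Assumption \ref{ass:dens} uniformly, giving
\[
  I_{\mathrm{near}} = t^{-1/2} e^{-\Lambda(k_t)/t}\int_0^\delta (e^{k_t+u}-e^{k_t})\, e^{k_t+u} \gamma(k_t+u)\, e^{-[\Lambda(k_t+u)-\Lambda(k_t)]/t}\bigl(1+o(1)\bigr)\,du.
\]
On the effective scale $u \sim v_0 t/k_t$ one has $e^{k_t+u}-e^{k_t}\sim e^{k_t}u$, $e^{k_t+u}\to 1$, $\gamma(k_t+u)\to \gamma(0)$, while the exponent Taylor-expands as $[\Lambda(k_t+u)-\Lambda(k_t)]/t = \Lambda'(k_t) u/t + \tfrac12 \Lambda''(k_t) u^2/t + \cdots$ with only the linear term surviving (the quadratic term contributes $\sim \Lambda''(0)(t/k_t)^2/t = v_0^{-1} t^{1-2\beta}/\ell(t)^2 \to 0$, and this is precisely where the restriction $\beta<\tfrac12$ is used). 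A standard Watson/Laplace argument then yields
\[
  \int_0^\delta u\, e^{-\Lambda'(k_t)u/t}\,du \sim \Bigl(\tfrac{t}{\Lambda'(k_t)}\Bigr)^2 \sim \frac{v_0^2 t^{2-2\beta}}{\ell(t)^2},
\]
using $\Lambda'(k_t) \sim \Lambda''(0) k_t = k_t/v_0$. Assembling the constants gives $I_{\mathrm{near}} \sim \gamma(0) v_0^2\, t^{3/2-2\beta}\ell(t)^{-2}\, e^{-\Lambda(k_t)/t}$, which is the claimed equivalent.

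For the far piece $I_{\mathrm{far}}$, since $k\geq k_t+\delta\geq \delta>0$ is bounded away from the money, a standard OTM large-deviation bound (e.g.\ $c(\delta,t)\leq e^{-c/t}$ for some $c>0$, combined with the tail identity $\int_{e^{k_t+\delta}}^\infty(K-e^{k_t})q\,dK = c(k_t+\delta,t) + (e^{k_t+\delta}-e^{k_t})\mathbb{P}(S_t\geq e^{k_t+\delta})$ and the Markov inequality for the share-digital term) bounds $I_{\mathrm{far}}$ by $e^{-c/t}$. This is $o(I_{\mathrm{near}})$ because $I_{\mathrm{near}}$ decays only like $\exp(-O(t^{2\beta-1}))$ with $1-2\beta<1$. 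The final step is logarithmic: Taylor-expand $\Lambda(k_t) = \sum_{m\geq 2}\Lambda^{(m)}(0)k_t^m/m!$ and note that terms with $m>1/\beta$ satisfy $k_t^m/t = t^{m\beta-1}\ell(t)^m \to 0$ (using that $\ell$ is slowly varying, so $\ell(t)^m = t^{o(1)}$), hence contribute only $o(1)$ inside $\log c(k_t,t)$; truncating at $m=\lfloor 1/\beta\rfloor$ yields \eqref{eq:refined}.

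The main obstacle is the sharpness control in the Laplace step: unlike Theorem \ref{thm:main}, we need the multiplicative prefactor, so the error of replacing $\Lambda(k_t+u)-\Lambda(k_t)$ by its linearization in $u$ must be carefully bounded on the effective scale $u\sim t^{1-\beta}/\ell(t)$, and the approximations $e^u-1\sim u$, $\gamma(k_t+u)\sim\gamma(0)$ must be shown uniform on that scale. The condition $\beta<\tfrac12$ enters here in an essential, quantitative way (not just to guarantee $k_t/\sqrt t\to\infty$ as in Theorem \ref{thm:main}). A secondary nuisance is the slowly varying factor $\ell(t)$: one has to use only Karamata-type crude bounds ($\ell(t)=t^{o(1)}$) rather than any smoothness of $\ell$, so it passes harmlessly through both the Laplace estimate and the truncation of $\Lambda(k_t)$.
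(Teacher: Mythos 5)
Your approach is genuinely different from the paper's, and the near-endpoint Laplace analysis is essentially correct, but there is a real gap in the far-piece control that the paper's route avoids by design.

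The paper does not start from the payoff integral $c(k_t,t)=\int_{k_t}^\infty(e^k-e^{k_t})e^k q(e^k,t)\,dk$. Instead it uses Dupire's formula to write $C(K_t,t)=\int_0^t \tfrac12 q(K_t,s)K_t^2\sigma_{\mathrm{loc}}^2(K_t,s)\,ds$, invokes Assumption~\ref{ass:loc} to replace $\sigma_{\mathrm{loc}}^2$ by $\sigma_0^2$, applies the density asymptotic at the \emph{single} log-strike $k_t$, and performs the Laplace estimate in the time variable $s$ near $s=t$ (yielding $\int_0^1 e^{-\Lambda(k_t)/(xt)}x^{-1/2}\,dx\sim \tfrac{t}{\Lambda(k_t)}e^{-\Lambda(k_t)/t}$). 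Because the strike stays fixed at $K_t$, the integral is localized in space from the outset and no far-strike tail ever appears. Your argument Laplace-expands in the strike variable near $k=k_t$, and it is a nice observation that the prefactor $\gamma(0)v_0^2 t^{3/2-2\beta}/\ell(t)^2$ comes out of this boundary-Watson estimate (with $\Lambda'(k_t)\sim k_t/v_0$) without ever invoking Assumption~\ref{ass:loc}. The two computations agree, as they must, since both reduce to $C(K_t,t)\sim \gamma(0)v_0^2(t^{3/2}/k_t^2)e^{-\Lambda(k_t)/t}$, i.e.\ equation~\eqref{eq:C as}.

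The gap is in $I_{\mathrm{far}}$. You assert ``a standard OTM large-deviation bound $c(\delta,t)\leq e^{-c/t}$'' and use the tail identity plus Markov for the digital piece. But Assumptions~\ref{ass:dens} and~\ref{ass:loc} are purely \emph{local}: the density asymptotic~\eqref{eq:dens} is uniform only in a neighbourhood of $K=1$, and local-volatility convergence is likewise an assertion in the joint limit $(K,t)\to(S_0,0)$. Nothing in the stated hypotheses controls $q(K,t)$ or $\mathbb{P}(S_t\geq e^{\delta})$ for $K$ bounded away from $1$, and Markov's inequality applied to the martingale only gives $\mathbb{P}(S_t\geq e^\delta)\leq e^{-\delta}$, which does not vanish as $t\downarrow0$. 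So the claimed $e^{-c/t}$ bound on $I_{\mathrm{far}}$ is an additional assumption, not a consequence; one would need a tail/moment hypothesis (as in Theorem~\ref{thm:transfer}) or a Freidlin--Wentzell upper bound to close it. The paper's Dupire reformulation sidesteps this entirely, which is precisely what makes the local assumptions sufficient there.
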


\medskip
  
  If $1/\beta$ is not an integer, then
  $k_t^m/t$ tends to infinity for $m=\lfloor 1/\beta \rfloor$, of order
  $t^{\beta \lfloor 1/\beta \rfloor-1}$ (up to a slowly varying factor).
  If $1/\beta$ is an integer, on the other hand, then the last summand of the sum
  $\sum_{m=2}^{\lfloor 1/\beta \rfloor}$
  in~\eqref{eq:refined} is of order $\ell(t)$, which means that the following term
  $\log(1/t)$ may be asymptotically larger. The upper summation limit $\lfloor 1/\beta \rfloor$
  thus ensures that no irrelevant (i.e., $o(1)$) terms are contained in the sum.
  Note that we have $\lfloor 1/\beta \rfloor=2$ for $\beta\in(\tfrac13,\tfrac12)$,
and $\lfloor 1/\beta \rfloor\geq 3$ for $\beta\in(0,\tfrac13)$, and
so~\eqref{eq:refined} is consistent with~\eqref{eq:i}
resp.~\eqref{eq:md2}.
\\

The passage from the (derivatives of the) energy
to ATM derivatives of the implied volatility in the short time limit is best conducted via the 
Berestycki-Busca-Florent (short: BBF) formula that was proved in~\cite{BeBuFl02}. (That said, 
theses relations are also a direct consequence of our expansions, as is pointed out in Section~\ref{se:iv}.)
In this regard, we have


\begin{theorem}\label{thm:skew}
  Suppose that~$\Lambda$ is a function with the properties required in Assumption~\ref{ass:dens},
  with $\Lambda''(0)=\sigma_0^{-2}=v_0^{-1}$,
  and that the Berestycki-Busca-Florent formula
$\sigma_{\mathrm{imp}}^{2}( 0,k)  =k^{2}/2\Lambda(k)$ holds. 
  Then the small-time ATM implied variance skew and curvature, respectively, relate to~$\Lambda$ via
  \begin{equation}\label{eq:skew L}
  \mathcal{S}:=\frac{\partial }{\partial k}\Big|_{k=0} \sigma_{\mathrm{imp}}^{2}(
  k,0)  =-\frac{1}{3}\frac{\Lambda^{\prime \prime \prime}(0)}{\Lambda^{\prime \prime}(0)^2}
  \end{equation}
  and
    \begin{equation}\label{eq:curvature L}
  \mathcal{C}:=\frac{\partial^2}{\partial k^2}\Big|_{k=0} \sigma_{\mathrm{imp}}^{2}(k,0) = \frac{\tfrac{2}{3}\Lambda^{\prime\prime\prime}(0)^2-\tfrac{1}{2}\Lambda^{\prime\prime\prime\prime}(0)\Lambda^{\prime\prime}(0)}{3\Lambda^{\prime\prime}(0)^3}.
  \end{equation}
\end{theorem}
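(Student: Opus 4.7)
The plan is a direct series-expansion argument. By hypothesis, the Berestycki--Busca--Florent formula gives
\[
  v(k) := \sigma_{\mathrm{imp}}^2(k,0) = \frac{k^2}{2\Lambda(k)},
\]
so the skew and curvature are obtained by Taylor-expanding the right-hand side around $k=0$ and reading off the coefficients of $k$ and $k^2$.

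First I would use $\Lambda(0)=\Lambda'(0)=0$ (from Assumption~\ref{ass:dens}) to write
\[
  \Lambda(k) = \tfrac12\Lambda''(0) k^2 + \tfrac16\Lambda'''(0) k^3 + \tfrac1{24}\Lambda^{(4)}(0) k^4 + O(k^5),
\]
so that
\[
  \frac{2\Lambda(k)}{k^2} = \Lambda''(0) + \tfrac13\Lambda'''(0)\, k + \tfrac1{12}\Lambda^{(4)}(0)\, k^2 + O(k^3).
\]
Setting $A=\Lambda''(0)$, $B=\tfrac13\Lambda'''(0)$, $C=\tfrac1{12}\Lambda^{(4)}(0)$ and inverting this power series via $(1+x)^{-1} = 1 - x + x^2 + O(x^3)$ yields
\[
  v(k) = \frac{1}{A} - \frac{B}{A^2}\, k + \frac{B^2 - AC}{A^3}\, k^2 + O(k^3).
\]

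Reading off the Taylor coefficients gives $v(0) = 1/\Lambda''(0) = \sigma_0^2$ (consistent with the stated relation $\sigma_0 = \Lambda''(0)^{-1/2}$), together with
\[
  \mathcal{S} = v'(0) = -\frac{B}{A^2} = -\frac13\frac{\Lambda'''(0)}{\Lambda''(0)^2},
\]
which is~\eqref{eq:skew L}, and
\[
  \mathcal{C} = v''(0) = \frac{2(B^2-AC)}{A^3} = \frac{\tfrac29\Lambda'''(0)^2 - \tfrac16\Lambda''(0)\Lambda^{(4)}(0)}{\Lambda''(0)^3},
\]
which, after multiplying numerator and denominator by $3$, matches~\eqref{eq:curvature L}.

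There is no real obstacle here beyond careful bookkeeping in the geometric-series inversion; the only substantive input is the BBF formula itself (which is assumed), the vanishing of $\Lambda$ and $\Lambda'$ at zero, and the smoothness of $\Lambda$ near zero (which justifies differentiating termwise). If desired, one could also derive the same identities as a corollary of the MOTM expansion~\eqref{eq:md2} and its higher-order extension~\eqref{eq:refined}: matching the leading and subleading exponents of $c(k_t,t)$ with the Black--Scholes price at implied vol $\sigma_{\mathrm{imp}}(k_t,t)$ and expanding $\sigma_{\mathrm{imp}}^2(k_t,0)$ in powers of $k_t$ reproduces~\eqref{eq:skew L}--\eqref{eq:curvature L}; the direct Taylor route above is, however, more economical.
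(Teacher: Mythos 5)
Your proof is correct and takes essentially the same approach as the paper: Taylor-expand $\Lambda$ using $\Lambda(0)=\Lambda'(0)=0$, invert the resulting series for $2\Lambda(k)/k^2$, and read off the linear and quadratic coefficients to identify $\mathcal{S}$ and $\mathcal{C}$. The remark at the end about recovering the identities from the MOTM call price expansion also matches what the paper itself notes in Section~\ref{se:iv}.
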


\begin{proof}
  By the BBF formula and our smoothness assumptions on~$\Lambda$,
  \begin{align*}
  \sigma_{\mathrm{imp}}^{2}(k,0)  &=\frac{k^{2}}{2\Lambda (
  k) } =k^{2}\bigg(\Lambda ^{\prime \prime }( 0) k^{2}+\frac{%
  1}{3}\Lambda ^{\prime \prime \prime }( 0) k^{3} +\frac{1}{12}\Lambda ^{\prime \prime \prime \prime }( 0) k^{4}+O(
  k^{5})\bigg)^{-1} \\
  & = \frac{1}{\Lambda^{\prime\prime}(0)}
      -\frac{1}{3}\frac{\Lambda^{\prime\prime\prime}(0)}{\Lambda^{\prime\prime}(0)^2}k 
      +\bigg(\frac{\tfrac{1}{9}\Lambda^{\prime\prime\prime}(0)^2-\tfrac{1}{12}\Lambda^{\prime\prime\prime\prime}(0)\Lambda^{\prime\prime}(0)}{\Lambda^{\prime\prime}(0)^3}\bigg)k^2, \quad k\to 0.
  \end{align*}
  This implies~\eqref{eq:skew L} and~\eqref{eq:curvature L}.
\end{proof}

\begin{remark} Proposition \ref{prop:osa} combined with Theorem \ref{thm:skew} allows to compute skew and curvature (and higher derivatives of the implied volatility smile, if desired)
directly from the coefficients of a general stochastic volatility model. Related formulae for  ``general'' (even non-Markovian) models also appear in the work of Durrleman (Theorem 3.1.1.\ in~\cite{Du04}; see also~\cite{Du10}).
While not written in the setting of general Markovian diffusion models, and hence not in terms of the energy function~$\Lambda$, they inevitably give the same results if applied
to given parametric stochastic volatility models (see Section~3.1 in~\cite{Du04}). However, this work comes with some (seemingly) uncheckable assumptions, the drawback of which is discussed in 
in Section~2.6 of~\cite{Du04}. 
\end{remark}
%
%
%

\section{Proofs of the main results}\label{se:proofs}

\begin{proof}[Proof of Theorem~\ref{thm:main}]
Since the density of~$S_t$ satisfies $q=\partial _{KK}C$, we have,
by Dupire's formula~\eqref{eq:dup},
\begin{equation*}
C( K,t) =\int_{0}^{t}\partial _{s}C( K,s)
ds=\int_{0}^{t}\frac{1}{2}q( K,s) K^{2}\sigma _{\mathrm{loc}}^{2}(
K,s) ds.
\end{equation*}%
Then, for $K_t=e^{k_t}$ with $k_t \downarrow 0$ as stated, we apply
Assumption~\ref{ass:loc} as follows:
\begin{align*}
C( K_t,t)  &=\int_{0}^{t}\frac{1}{2}q( K_t,s) K_t^{2}\sigma
_{\mathrm{loc}}^{2}( K_t,s) ds \\
&\sim \frac{\sigma_0^2}{2} \int_{0}^{t}q( K_t,s) ds, \quad t\downarrow 0.
\end{align*}
%
And then, using local uniformity of our density expansion~\eqref{eq:dens},
\begin{align}
C(K_t,t) &\sim \frac{\sigma_0^2\gamma(0)}{2} \int_{0}^{t}e^{-\frac{\Lambda
( k_{t}) }{s}}{s^{-1/2}}dt \notag \\
&=\frac{\sigma_0^2\gamma(0)}{2} t\int_{0}^{1}e^{-\frac{\Lambda ( k_{t}) }{xt}}\left( xt\right) 
{^{-1/2}}dx \notag \\
&=\frac{\sigma_0^2\gamma(0)}{2} t^{1/2}\int_{0}^{1}e^{-\frac{\Lambda ( k_{t}) }{xt}}x^{-1/2}%
dx. \label{eq:int for laplace}
\end{align}
Since $\Lambda$ is smooth at zero, and using the fact that $\Lambda ( 0) =\Lambda ^{\prime
}( 0) =0$, we have
\[
  \frac{\Lambda(k_t)}{t} \sim \frac12 \Lambda''(0) \frac{k_t^2}{t}\to\infty
  \quad \text{as}\ t\downarrow 0.
\]
For small $t$, the integrand in~\eqref{eq:int for laplace} is  thus concentrated near $x=1$,
and by the Laplace method (Theorem~7.1 in~\cite{Ol74})
\begin{equation*}
\int_{0}^{1}e^{-\frac{\Lambda ( k_{t}) }{xt}}x{^{-1/2}}dx \sim
\frac{t}{\Lambda ( k_{t}) } \exp\Big(-\frac{\Lambda ( k_{t}) }{t} \Big).
\end{equation*}%
Therefore,
\begin{align}
  C(K_t,t) &\sim \frac{\sigma_0^2\gamma(0)}{2}
  \frac{t^{3/2}}{\Lambda ( k_{t}) } \exp\Big(-\frac{\Lambda ( k_{t}) }{t} \Big) \notag \\
  &\sim v_0^2 \gamma(0) \frac{t^{3/2}}{\ k_{t}^2 } \exp\Big(-\frac{\Lambda ( k_{t}) }{t} \Big), 
  \quad  t\downarrow 0, \label{eq:C as}
\end{align}
which implies (recall the notation~$c$ resp.~$C$ from~\eqref{eq:not})
\begin{align}
  -\log c(k_t,t) &= \frac{\Lambda ( k_{t}) }{t} - \log \frac{t^{3/2}}{ k_{t}^2 }
    +O(1) \label{eq:c} \\
    &= \frac1t \left( \frac12 \Lambda''(0)k_t^2 + \frac16 \Lambda'''(0)k_t^3
     + O(k_t^4)\right) + O\Big(\log \frac{k_t^2}{t^{3/2}}\Big). \notag
\end{align}
to prove (i) and (ii), we thus need to argue that $k_t^2/t$
 dominates $\log(k_t^2 t^{-3/2})$ if $\beta \in(0,\tfrac12)$, and that $k_t^3/t$ dominates
 $\log(k_t^2 t^{-3/2})$ if $\beta \in(0,\tfrac13)$.
For $m \in \{2,3\}$, we calculate
\begin{align*}
  \frac{k_t^m/t}{|\log(k_t^2 t^{-3/2})|} &=
    \frac{t^{m\beta-1}\ell(t)^m}{|\log(t^{2\beta-3/2}\ell(t)^2)|} \\
   &= \frac{t^{m\beta-1}\ell(t)^m}{|(2\beta-\tfrac32)\log t + 2\log \ell(t)|}.
\end{align*}
{}From Proposition 1.3.6 (i) in~\cite{BiGoTe87} we know that $\log \ell(t) = o(\log t)$, and so
\[
   \frac{k_t^m/t}{|\log(k_t^2 t^{-3/2})|} \sim
     \frac{t^{m\beta-1}\ell(t)^m}{|(2\beta-\tfrac32)\log t|}, \quad t\downarrow0.
\]
This tends to infinity for $m=2$ and $\beta\in(0,\tfrac12)$, and 
for $m=3$ and $\beta\in(0,\tfrac13)$, as desired.
\end{proof}
\medskip

Inspecting the preceding proof, it is easy to see that we can expand 
$\log c(k_t,t)$ further:
\medskip

\begin{proof}[Proof of Theorem \ref{thm:higher}] 
  Taking logs in~\eqref{eq:C as} yields
  \[
    \log c(k_t,t) = -\frac{\Lambda(k_t)}{t} + \log \frac{t^{3/2}}{k_t^2}
      + \log\big(\gamma(0) v_0^2\big) + o(1).
  \]
  Then~\eqref{eq:refined} follows by Taylor expanding~$\Lambda$. Note that $k_t^m/t = o(1)$
  for $m \geq \lfloor 1/\beta \rfloor +1$.
\end{proof}

\section{Implied volatility}\label{se:iv}

We now investigate some relations of MOTM call price expansions with implied volatility. 

\begin{corollary}\label{cor:iv}
   Under the assumptions of Theorem~\ref{thm:main},
   put $k_t=t^\beta \ell(t)$  with $\beta\in (0,\tfrac13)$ and $\ell>0$ slowly varying.
   Then the implied volatility   has the following MOTM expansion:      
  \begin{align}
     \sigma_{\mathrm{imp}}(k_t,t)  &= \sigma_0 - \tfrac16  \sigma_0^3\Lambda'''(0)
       k_t(1+o(1)) \label{eq:iv1} 
  \end{align}
\end{corollary}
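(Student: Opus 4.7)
The plan is to leverage the defining identity $c(k_t,t) = c_{\mathrm{BS}}(k_t,t;\sigma_t)$, where $\sigma_t:=\sigma_{\mathrm{imp}}(k_t,t)$, and match the log of each side to the appropriate order in $k_t$. Since Black--Scholes is itself a stochastic volatility model with energy function $\Lambda_{\mathrm{BS}}(k)=k^2/(2\sigma^2)$, all its higher derivatives at zero vanish and the analogue of Theorem~\ref{thm:higher} reduces to the purely quadratic expansion
\[
  \log c_{\mathrm{BS}}(k_t,t;\sigma) = -\frac{k_t^2}{2\sigma^2 t} + O(\log(1/t)),
\]
with the error uniform for $\sigma$ in any compact subinterval of $(0,\infty)$. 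Applying this at $\sigma=\sigma_t$ requires first knowing that $\sigma_t$ stays in such a set; this follows by a short bootstrap from part~(i) of Theorem~\ref{thm:main}: comparing the first-order asymptotics $t^{1-2\beta}\log c(k_t,t)\sim -\theta^2/(2\sigma_0^2)$ on both sides forces $\sigma_t\to\sigma_0$ as $t\downarrow 0$.

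With this in hand, I would combine the Black--Scholes expansion evaluated at $\sigma=\sigma_t$ with the stochastic-vol expansion \eqref{eq:md2}, namely
\[
  \log c(k_t,t) = -\frac{k_t^2}{2\sigma_0^2 t} + \frac{\mathcal{S}}{2\sigma_0^4}\,\frac{k_t^3}{t}\bigl(1+o(1)\bigr),
\]
equate the two and divide through by $k_t^2/(2t)$ to obtain
\[
  \frac{1}{\sigma_t^2} = \frac{1}{\sigma_0^2} - \frac{\mathcal{S}}{\sigma_0^4}\,k_t\bigl(1+o(1)\bigr) + O\!\left(\frac{t\log(1/t)}{k_t^2}\right).
\]
A binomial inversion then yields $\sigma_t^2 = \sigma_0^2 + \mathcal{S}\,k_t(1+o(1))$, and a Taylor-expanded square root gives $\sigma_t = \sigma_0 + \tfrac{\mathcal{S}}{2\sigma_0}k_t(1+o(1))$. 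Substituting the skew identity $\mathcal{S}=-\tfrac13 \Lambda'''(0)/\Lambda''(0)^2 = -\tfrac13 \sigma_0^4 \Lambda'''(0)$ from Theorem~\ref{thm:skew} finally converts this into \eqref{eq:iv1}.

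The main obstacle is error bookkeeping: the residual $t\log(1/t)/k_t^2 \sim t^{1-3\beta}\log(1/t)/\ell(t)^3 \cdot \ell(t)$ must be $o(k_t)$ so as not to pollute the linear-in-$k_t$ correction. Since any slowly varying $\ell$ is beaten by an arbitrary positive power of $t$, this is equivalent to $1-3\beta>0$, which is precisely the hypothesis $\beta<1/3$ of the corollary (and explains why the present sharp result requires strictly less than the $\beta<1/2$ needed for part~(i) of Theorem~\ref{thm:main}). The only other delicate point is the uniformity in $\sigma$ of the $O(\log(1/t))$ remainder in the Black--Scholes expansion over a shrinking neighbourhood of $\sigma_0$, which is transparent from the proof of Theorem~\ref{thm:higher}. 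Everything else is a routine Taylor inversion.
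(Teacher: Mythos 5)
Your approach is correct but genuinely different from the paper's. The paper invokes Corollary~7.2 of Gao--Lee~\cite{GaLe14}, a general call-price-to-implied-volatility transfer theorem, to pass from the second-order MOTM estimate \eqref{eq:md2} to \eqref{eq:iv1}. You instead use the defining identity $c(k_t,t)=c_{\mathrm{BS}}(k_t,t;\sigma_{\mathrm{imp}}(k_t,t))$ directly, expand both logarithms to the required accuracy, and invert. This is more self-contained (no external transfer lemma needed) and makes the origin of the restriction $\beta<\tfrac13$ completely visible: it is exactly the point at which the $O(\log(1/t))$ remainder in the Black--Scholes expansion, once divided by $k_t^2/t$, becomes $o(k_t)$. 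The paper's route offloads precisely this error bookkeeping onto the Gao--Lee result, which handles it via the $O(t^{1-2\beta-\varepsilon})$ and $O(t^{5/2-4\beta-\varepsilon})$ terms in \eqref{eq:V}.

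Two minor points deserve tightening. First, your detour through $\mathcal{S}$ and Theorem~\ref{thm:skew} is unnecessary and actually inverts the paper's logical order: the corollary is meant to \emph{produce} the skew identity \eqref{eq:skew L} (as remarked after \eqref{eq:iv S}), not to consume it, and Theorem~\ref{thm:skew} as stated assumes the BBF formula. You should simply use the first form of \eqref{eq:md2}, $\log c(k_t,t) = -\tfrac12\Lambda''(0)k_t^2/t - \tfrac16\Lambda'''(0)k_t^3/t(1+o(1))$, and invert to reach \eqref{eq:iv1} directly, without ever introducing $\mathcal{S}$. Second, the uniformity in $\sigma$ of the Black--Scholes remainder is best derived not from Theorem~\ref{thm:higher} (proved for a fixed model) but from the explicit Black--Scholes asymptotics $c_{\mathrm{BS}}(k,t;\sigma)\sim \frac{\sigma^3 t^{3/2}}{\sqrt{2\pi}\,k^2}\exp(-k^2/(2\sigma^2 t)+k/2-\sigma^2 t/8)$ valid as $k\to0$ with $k/\sqrt t\to\infty$, whose $o(1)$ error is manifestly locally uniform in $\sigma$; likewise the bootstrap $\sigma_{\mathrm{imp}}(k_t,t)\to\sigma_0$ deserves a line ruling out escape of $\sigma_t$ to $0$ or $\infty$ (via monotonicity of $c_{\mathrm{BS}}$ in $\sigma$ and the fact that $c(k_t,t)\to0$). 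With these small repairs your argument is complete.
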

\begin{proof}
   We use our main result (Theorem~\ref{thm:main})
   in conjunction with a transfer result of Gao and Lee~\cite{GaLe14}.
   As the call price tends to zero, we are in case~``$(-)$'' of~\cite{GaLe14}
   (defined on p.~354 of that paper).
    The notation $L$, $V$ of~\cite{GaLe14} means  $L=-\log c(k_t,t)$ 
    resp.\ $V=t^{1/2} \sigma_{\mathrm{imp}}(k_t,t)$, the dimensionless
   implied volatility. Then  Corollary~7.2 of~\cite{GaLe14} implies that
   \begin{equation}\label{eq:V}
      V = \frac{k_t}{\sqrt{2L}}\big(1+O(t^{1-2\beta-\varepsilon})\big)
        + O(t^{5/2-4\beta-\varepsilon}), \quad t\downarrow0.
   \end{equation}
   Here, $\varepsilon>0$ denotes an arbitrarily small constant that serves
   to eat up slowly varying functions in $O$-estimates.
   By part~(ii) of Theorem~\ref{thm:main}, we have
   \[
      2L =  \frac{1}{\sigma_0^2}\frac{k_t^2}{t} + \frac{\Lambda'''(0)}{3}
         \frac{k_t^3}{t}(1+o(1)).
   \]
   Inserting this into~\eqref{eq:V} gives
   \begin{equation*}
     \sigma_{\mathrm{imp}}(k_t,t)= t^{-1/2}k_t\left(\sigma_0 \frac{t^{1/2}}{k_t}
        -\frac{\sigma_0^3 \Lambda'''(0)}{6} t^{1/2}\big(1+o(1)\big) \right) + O(t^{2-4\beta-\varepsilon}),
   \end{equation*}
   which yields~\eqref{eq:iv1}.
\end{proof}
\medskip

The above corollary has some interesting consequences. Under the sheer assumption that implied volatility 
has a first order Taylor expansion for small maturity and small log-strike of the form
\begin{equation}\label{eq:taylor}
  \sigma_{\mathrm{imp}}(k,t) = \sigma_0 + \partial_k \sigma_{\mathrm{imp}}(0,0)\, k
   + o(k) +O(t), \quad t\downarrow0,\ k=o(1),
\end{equation}
then of course in the MOTM regime, we have $ t \ll k_t$, and so the $k$-term dominates the $O(t)$-term, which
in turn identifies the implied variance skew as
\begin{equation}
        \mathcal{S} = \lim_{t\downarrow0} \frac{2\sigma_0}{k_t}(\sigma_{\mathrm{imp}}(k_t,t)-\sigma_0).
         \label{eq:iv S}  
\end{equation}
On the other hand, Corollary \ref{cor:iv} now implies that the right-hand side of (\ref{eq:iv S}) equals
$-\tfrac{1}{3} \sigma_0^4 \Lambda^{\prime \prime \prime}(0)$. 
We have thus arrived at an alternative proof of the skew representation~\eqref{eq:skew L} in terms of the energy function,
without using the BBF formula. (The curvature and higher order derivatives of the ATM smile can be dealt with similarly, if desired.)

\section{Examples}\label{se:examples}

\subsection{Generic local volatility models}\label{se:loc}

Clearly, Assumption~\ref{ass:loc} is satisfied for any local volatility model,
assuming continuity of the local volatility function. We now discuss
Assumption~\ref{ass:dens}, and show how to compute our MOTM expansions.
First consider the time-homogeneous local volatility model
\begin{equation}\label{eq:loc dyn}
  dS_t = \sigma(S_t)S_t dW_t, \quad S_0 = 1,
\end{equation}
where $\sigma$ is $C^2$ on $(0,\infty)$. An expansion
of the pdf~$q(\cdot,t)$ of~$S_t$ has been worked out in Gatheral et al.~\cite{GaHsLaOuWa12}.
They assume growth conditions on $\sigma$ and its derivatives,
which can be alleviated by the principle of not feeling the boundary
(Appendix~A of~\cite{GaHsLaOuWa12}).
 Proposition~2.1 of~\cite{GaHsLaOuWa12}
says that
\[
  q(e^k,t) \sim \frac{e^{-k}u_0(0,k)}{\sqrt{2\pi t}}
  \exp\bigg(-\frac{\Lambda(k)}{t}\bigg)   ,\quad t\downarrow0,
\]
uniformly in~$k$,
where the energy function is given by (cf.\ Varadhan~\cite{Va67})
\[
  \Lambda(k) = \frac12 \bigg( \int_0^k \frac{dx}{\sigma(e^x)} \bigg)^2,
\]
and
\[
  u_0(0,k) = \sigma(1)^{1/2}\sigma(e^k)^{-3/2}e^{-k/2}.
\]
(Recall that we normalize spot to $S_0=1$ throughout.)
This shows that Assumption~\ref{ass:dens} is satisfied, with
$\gamma(0)=1/(\sqrt{2\pi}\sigma(1))$. To evaluate the expansions from
Theorem~\ref{thm:main}, we compute the derivatives of~$\Lambda$:
\[
  \Lambda'(k) = \frac{1}{\sigma(e^k)} \int_0^k \frac{dx}{\sigma(e^x)},
  \qquad
  \Lambda''(k) = \frac{1}{\sigma(e^k)^2} - \frac{e^k\sigma'(e^k)}{\sigma(e^k)^2}
     \int_0^k \frac{dx}{\sigma(e^x)},
\]
and
\[
  \Lambda'''(k) = -\frac{3e^k \sigma(e^k)}{\sigma(e^k)^3}
  +\left(\frac{2e^{2k}\sigma'(e^k)^2}{\sigma(e^k)^3}-\frac{\sigma(e^k)''}{\sigma(e^k)^2} \right)\int_0^k \frac{dx}{\sigma(e^x)},
\]
which yield
\[
  \Lambda''(0) = \frac{1}{\sigma(1)^2} =\frac{1}{\sigma(S_0)^2}
\]
and
\begin{equation}\label{eq:loc 3}
  \Lambda'''(0) = -\frac{3\sigma'(1)}{\sigma(1)^3} =-\frac{3\sigma'(S_0)}{\sigma(S_0)^3}.
\end{equation}
Alternatively, these expressions can be obtained from Proposition~\ref{prop:osa}.
Since the assumptions of Theorem~\ref{thm:main} are satisfied, we obtain
the following MOTM call price estimates, where $k_{t}=\theta t^{\beta }$ and $\theta>0$:
\begin{align*}
  c( k_{t},t) &= \exp \bigg( -\frac{\theta^2}{2 \sigma(1)^2 t^{1-2\beta }}%
  \big(1+o(1)\big)\bigg), \quad \beta\in(0,\tfrac12),\ t\downarrow0, \\
  c( k_{t},t)  
   &= \exp \bigg( -\frac{\theta ^{2}}{2\sigma(1)^2 t^{1-2\beta }}-\frac{\sigma'(1)}{2\sigma(1)^3}\frac{\theta ^{3}}{t^{1-3\beta }}\big(1+o(1)\big)\bigg), \quad
    \beta\in(0,\tfrac13),\  t\downarrow0 .
\end{align*}
Recall that we denote by~$\mathcal{S}$ the (limiting small-time ATM) implied \emph{variance} skew,
and so the implied \emph{volatility} skew is given by
$\mathcal{S}/2\sigma_0$, which equals $\mathcal{S}/2\sigma(1)=\mathcal{S}/2\sigma(S_0)$ in the model~\eqref{eq:loc dyn}.
From~\eqref{eq:skew L} and~\eqref{eq:loc 3}
we recover that the \emph{local} skew $\sigma'(1)=\sigma'(S_0)$ equals
twice the implied volatility skew,
\[
  \sigma'(S_0) = 2 \times \frac{\mathcal{S}}{2\sigma(S_0)},
\]
as observed by Henry-Labord\`ere (Remark~5.2 in~\cite{He09}). Generic time-inhomogeneous local volatility models
\[
  dS_t = \sigma(S_t,t)S_t dW_t
\]
could be treated very similarly, using the heat kernel expansion
in Section~3 of~\cite{GaHsLaOuWa12}, itself taken from Yosida~\cite{Yo53}.

\subsection{Generic stochastic volatility models}\label{se:stoch vol}

We now discuss the results of Section~\ref{se:dens} in generic stochastic volatility models.
Rigorous conditions under which stochastic volatility models satisfy
Assumption~\ref{ass:dens} can be found in~\cite{DeFrJaVi14a,Os15}.
The function~$\Lambda$ is given by the Riemannian metric associated to the model:
$2\Lambda ( k) $ is the squared geodesic distance from $%
( S_{0}=1,\sigma _{0}) $ to $\left\{ ( K,\sigma )
:\sigma >0\right\} $ with $K=e^{k}$.
Theorem~2.2 in Berestycki, Busca, and Florent~\cite{BeBuFl04} gives conditions under which
Assumption~\ref{ass:loc}, concerning convergence of local volatility, is true.

Now we describe how the expressions appearing in the expansions from Theorem~\ref{thm:main}
can be computed explicitly in a generic two-factor stochastic volatility model
\begin{align}
  dS_t &= S_t\sqrt{V_t} dW_t, \quad S_0 = 1, \nonumber \\
  dV_t &= (\dots) dt + \eta\sqrt{V_t}\nu(V_t) dZ_t, \quad V_0 = v_0 > 0,   \label{SVdynamics}
\end{align}
where $\nu\colon\mathbb{R}\to\mathbb{R}$ and $d\langle W,Z\rangle_{t}=\rho\, dt$.
The Heston model ($\nu(v)\equiv 1$) and the 3/2-model ($\nu(v)=v$; see~\cite{Le00})
are special cases.
The infinitesimal generator $L$ of the stochastic process $(S,V)$, neglecting first order terms, can be written as
\[
  Lf \approx \frac12\operatorname{Tr}\left(\left(\begin{matrix} g_{11} & g_{12} \\ g_{21} & g_{22}\end{matrix}\right) D^2 f\right),\quad f\in C^2(\mathbb{R}^2),
\]
where $D^2 f$ denotes the Hessian matrix of $f$, and the coefficient matrix $g=(g_{ij})$ in this model is given by
\[
  g = \left(\begin{matrix} v & \rho\eta v\nu(v) \\ \rho\eta v\nu(v) & \eta^2 v\nu(v)^2\end{matrix}\right).
\]
We define the constants $b_1 = g_{11}|_{v=v_0} = v_0$ and $b_2 = \frac34\sum_{i=1}^2 g_{1i} \partial_i g_{11}|_{v=v_0} = \frac34\rho\eta v_0\nu(v_0)$.
If we assume that the coefficients in (\ref{SVdynamics}) are nice enough to justify application of the (marginal) density expansion obtained in \cite{DeFrJaVi14a} or part~(2) of Theorem 1 in~\cite{Os15}, 
we get the desired small-time density expansion \eqref{eq:dens}. Moreover, thanks to Proposition~\ref{prop:osa},
\[
  \Lambda(k) = \frac1{2b_1}k^2 - \frac{b_2}{3b_1^3}k^3 + O(k^4)
\]
in a neighborhood of 0. Therefore the quantities $\Lambda''(0)= v_0^{-1} = \sigma_0^{-2}$ and $\Lambda'''(0) = -\frac32\rho\eta\nu(v_0)/v_0^2$ can easily be computed, as well as the small-time ATM implied variance skew 
\[
  \mathcal{S} = -\frac{v_0^2}{3}\Lambda'''(0) = \frac{\rho\eta}{2}\nu(v_0).
\]
Thus, all quantities appearing in our expansions (Theorem~\ref{thm:main},
Corollary~\ref{cor:iv}) have very simple expressions in terms 
of the model parameters.

\subsection{The Heston model}\label{se:heston}

This section contains an application of the results of Sections~\ref{se:dens} and~\ref{se:iv}
to the familiar case of the Heston model,
where many explicit ``affine'' computations are possible.
The Heston model is not in the scope of the general results implying Assumptions~\ref{ass:dens}
and~\ref{ass:loc}, which we recalled at the beginning of Section~\ref{se:stoch vol}.
We will explain how both assumptions can be verified rigorously by a dedicated analysis; full
details would involve rather dull repetition of arguments that are found in the literature
in a very similar form, and are therefore omitted.
The model dynamics are
\begin{align*}
dS_{t}& =S_{t}\sqrt{V_{t}}dW_{t},\qquad S_{0}=1,  \notag \\
dV_{t}& =-\kappa \left( V-\bar{v}\right) dt+\eta\sqrt{V_{t}}dZ_{t},\qquad V_{0}=v_{0}>0,
\label{E:H}
\end{align*}
where $\bar{v}, \kappa, \eta>0$, and $d\langle W,Z\rangle
_{t}=\rho\, dt$ with $\rho \in (-1,1)$.
According to Forde and Jacquier~\cite{FoJa09}, the first order OTM (large deviations) behavior
of the call prices is
\begin{equation}\label{eq:FJ}
t\log c_{\mathrm{He}}( k,t) \sim -\Lambda_{\mathrm{He}}( k), \quad k>0\ \text{fixed},\ t\downarrow0,
\end{equation}%
where $\Lambda_{\mathrm{He}}$ is the (not explicitly available) Legendre
transform of 
\begin{align}
\Gamma( p)  &=\frac{v_{0}p}{\eta ( \bar{\rho}\cot \left(
\eta \bar{\rho}p/2\right) -\rho )}  \notag \\
&= \frac{v_{0}p}{\eta \big( \bar{\rho}\big(\frac{1}{\eta \bar{\rho}p/2}%
+O(p)\big)-\rho \big)}  \notag \\
&=\frac{v_{0}p}{\frac{1}{p/2}-\eta \rho +O(p)}  \notag \\
&=\frac{v_{0}p^{2}/2}{1-p\eta \rho /2+O( p^{2}) } \notag \\
&=\frac{v_{0}p^{2}}{2}\left( 1+p\eta \rho /2+O( p^{2}) \right). \label{eq:Gamma expans}
\end{align}
(We use the standard notation $\bar{\rho}^2=1-\rho^2$.)
This expansion implies
\begin{equation}\label{eq:Heston''}
\Gamma^{\prime \prime }( 0) =v_{0}=\sigma_0^2.
\end{equation}
The locally uniform density asymptotics~\eqref{eq:dens} hold,
as seen from an easy modification of the arguments in Forde, Jacquier,
and Lee~\cite{FoJaLe12}.
There, the Fourier representation of the call price was analysed by the saddle point method
to obtain a refinement of~\eqref{eq:FJ}. Proceeding completely analogously
for the Fourier representation of the pdf of~$S_t$, we get the density approximation
\begin{align*}
  q_{\mathrm{He}}(e^k,t) &= \frac{e^{-k}}{2\pi t}\int_{-\infty-i p^*(k)}^{\infty-i p^*(k)}
    \mathrm{Re}\big(e^{i k u/t} \phi_t(-u/t) \big) du \\
    &= \exp\bigg( -\frac{\Lambda_{\mathrm{He}}(k)}{t} \bigg) 
     \frac{U(p^*(k))}{\sqrt{2\pi \Gamma''(k)}}t^{-1/2}\big(1+o(1)\big), \quad t\downarrow0,
\end{align*}
locally uniformly in~$k$,
where $\phi_t$ is the characteristic function of~$X_t=\log S_t$,
and~$p^*$ and~$U$ are defined on p.~693 of~\cite{FoJaLe12}.
(Note that~\cite{FoJaLe12} uses the notation~$\Lambda,\Lambda^*$ instead
of our $\Gamma,\Lambda_{\mathrm{He}}$.)
From~\eqref{eq:Heston''} and the fact that $U(p^*(0))=U(0)=1$, we see that
the factor~$\gamma(k)$ from~\eqref{eq:dens} converges to
\begin{equation}\label{eq:c0}
  \gamma(0) = \frac{1}{\sqrt{2\pi}\sigma_0}
\end{equation}
as $k\to0$.

To verify Assumption~\ref{ass:loc} (convergence of local volatility),
the Dupire formula~\eqref{eq:dup} can be subjected to an analysis similar
to~\cite{DeFrGe13,FrGe15}. More precisely, $\partial_{KK}C(K,t)$ in the numerator of~\eqref{eq:dup}
is the pdf of~$S_t$, the analysis of which we have just described.
Virtually the same saddle point approach can be applied to the 
numerator $\partial_t C(K,t)$, yielding convergence of the quotient to~$\sigma_0^2$.

We now calculate our MOTM asymptotic expansions for the Heston model.
The Legendre transform~$\Lambda_{\mathrm{He}}$ is given by
$\Lambda_{\mathrm{He}}( k) =\sup_{x}\{kx-\Gamma( x) \}$ with
maximizer $x^{\ast }=x^*( k) $. From general facts on Legendre
transforms,%
\begin{equation*}
  \Lambda_{\mathrm{He}}^{\prime \prime }( k) =\frac{1}{\Gamma^{\prime \prime
  }( x^*( k) ) }.
\end{equation*}
Since $x^*( 0)=0 $, we have
\begin{equation*}
  \Lambda_{\mathrm{He}}^{\prime \prime }( 0) =\frac{1}{\Gamma^{\prime \prime
  }( 0) }=\frac{1}{v_{0}}.
\end{equation*}
{}From Theorem~\ref{thm:main}, with $k_{t}=\theta t^{\beta }$ and $\theta>0$, we
then obtain the MOTM call price estimate
\begin{equation}\label{eq:Heston c}
  c_{\mathrm{He}}( k_{t},t) = \exp \bigg( -\frac{\theta^2}{2v_{0}t^{1-2\beta }}%
  \big(1+o(1)\big)\bigg), \quad t\downarrow0.
\end{equation}
As for the second order expansion, from the expansion~\eqref{eq:Gamma expans} of $\Gamma$ we clearly see that
\begin{equation*}
  \Gamma^{\prime \prime \prime }( 0) =\frac{3}{2}v_{0}\eta \rho .
\end{equation*}%
On the other hand, a general Legendre computation gives
\begin{equation*}
\Lambda_{\mathrm{He}}^{\prime \prime \prime }( k) =-\left( \frac{1}{\Gamma
^{\prime \prime }( x^*( k) ) }\right) ^{2}\Gamma
^{\prime \prime \prime }( x^*( k) )\, (x^*)^{\prime }(
k) =-(\Lambda_{\mathrm{He}}^{\prime \prime }\left( k\right) )^{3}\, \Gamma^{\prime
\prime \prime }( x^*( k) ).
\end{equation*}%
Therefore,
\begin{equation*}
\Lambda_{\mathrm{He}}^{\prime \prime \prime }( 0) =-\frac{3}{2}\frac{\eta \rho 
}{v_{0}^{2}},
\end{equation*}
in accordance with the expression
for generic two-factor models, found in Section~\ref{se:stoch vol}.
For $\beta \in(0,\tfrac13)$, Theorem~\ref{thm:main} (ii) thus implies the second order expansion 
\begin{equation}\label{eq:md2 H}
c_{\mathrm{He}}( k_{t},t)  
= \exp \bigg( -\frac{\theta ^{2}}{2v_{0}t^{1-2\beta }}+\frac{\eta
\rho }{4v_{0}^{2}}\frac{\theta ^{3}}{t^{1-3\beta }}\big(1+o(1)\big)\bigg), \quad t\downarrow0 .
\end{equation}
By Theorem~\ref{thm:higher} and~\eqref{eq:c0}, we obtain the following refined call price
expansions:
\begin{align}
  \log c_{\mathrm{He}}(k_t,t) &= -\frac{1}{2\sigma_0^2} \frac{k_t^2}{t} + \Big(\frac32-2\beta\Big)\log t
  + \log \frac{\sigma_0^3}{\sqrt{2\pi}} + o(1),
  \quad \beta\in(\tfrac13,\tfrac12), \label{eq:H higher} \\
  \log c_{\mathrm{He}}(k_t,t) &= -\frac{1}{2\sigma_0^2} \frac{k_t^2}{t}
    + \frac{\eta\rho}{4v_0^2} \frac{k_t^3}{t}
   + \Big(\frac32-2\beta\Big)\log t
  + \log \frac{\sigma_0^3}{\sqrt{2\pi}} + o(1),
   \quad \beta\in(\tfrac14,\tfrac13) \label{eq:H higher 2}.
\end{align}
{}From the relation~\eqref{eq:skew L} between implied variance skew and $\Lambda
^{\prime \prime \prime }(0)$, we obtain the explicit expression
$\mathcal{S}_{\mathrm{He}}=\eta \rho /2$ for the skew. This agrees
with Gatheral~\cite{Ga06}, p.~35.
The implied volatility expansion~\eqref{eq:iv S}
becomes
\begin{equation}\label{eq:H iv}
  \sigma_{\mathrm{imp}}(k_t,t)=\sigma_0 + \frac{\eta\rho}{4\sigma_0}k_t\big(1+o(1)\big) , \quad t\downarrow0.
\end{equation}
Figure~\ref{fig:iv} shows a good fit of this approximation, even for maturities
that are not very small.

\begin{figure}[h]
  \centering
  \includegraphics{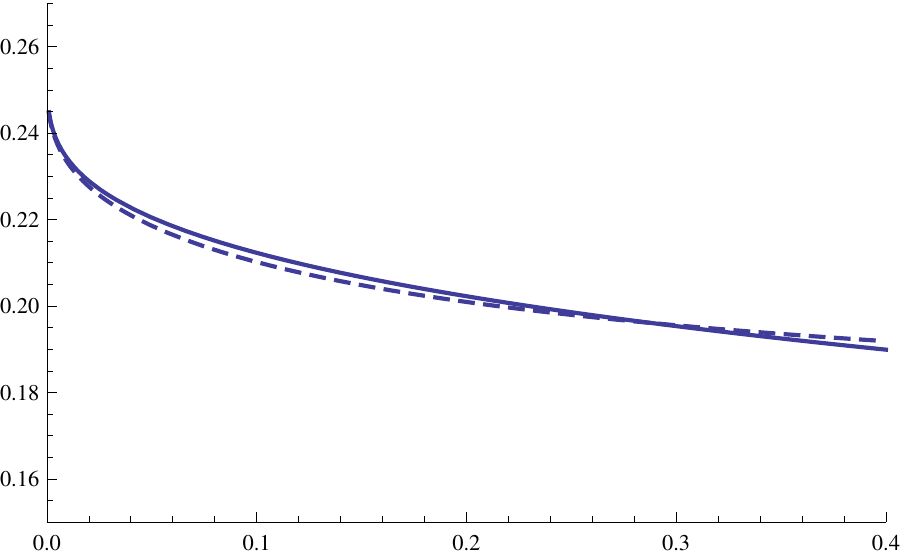}
  \caption{Illustration of our implied volatility expansion for the Heston model, with
  $\ell\equiv\theta =0.4$
  and $\beta=0.3$. Thus, log-strike
  equals $k=0.4\, t^{0.3}$.
   The model parameters are
  $\bar{v}=0.0707,$ $\kappa=0.6067,$ $\eta=0.2928,$ $\rho=-0.7571,$ $v_0=0.0654$ (i.e.,
  $\sigma_0=0.2557$), and $S_0=1$.
  The horizontal axis is time.
  The dashed curve is the exact MOTM implied volatility $\sigma_{\mathrm{imp}}(k_t,t)$.
  The solid curve is the approximation $\sigma_0 + \frac{\eta\rho}{4\sigma_0}k_t$
  on the right hand side of~\eqref{eq:H iv}.}
  \label{fig:iv}
\end{figure}


\section{MOTM option prices via the G\"artner-Ellis theorem}\label{se:ge}

In this section we discuss a different approach at small-time moderate deviations.
While yielding only first order results, its conditions are usually easy to check
for models with explicit characteristic function.
Assumptions~\ref{ass:dens} and~\ref{ass:loc} are not in force here.

 Recall that,
in the classical setting of sequences of i.i.d.\ random variables, a moderate deviation
analogue of Cram\'er's theorem can be deduced by applying the G\"artner-Ellis theorem
to an appropriately rescaled sequence (see~\cite{DeZe98}, Section~3.7).
The MD short time behavior of diffusions can be subjected to a similar analysis.
Consider the log-price $X_t=\log S_t$ with $X_0=0$
and mgf (moment generating function)
\begin{equation}\label{eq:mgf}
  M(p,t) := \mathbb{E}[e^{p X_t}].
\end{equation}
\begin{assumption}\label{ass:ge}
    For all $\beta\in(0,\tfrac12)$, the rescaled mgf satisfies
  \begin{equation}\label{eq:ass md}
    \lim_{t\downarrow0}  t^{1-2\beta} \log M(t^{\beta-1}p,t) = \tfrac12\sigma_0^2 p^2,
    \quad p\in\mathbb{R}.
  \end{equation}
\end{assumption}
We expect that this assumption holds for diffusion models in considerable generality.
It is easy to check that~\eqref{eq:ass md} holds for the Heston model, either by
its explicit characteristic function, or, more elegantly, from the associated Riccati equations;
see the forthcoming PhD thesis of A.~Pinter for details.
Thus, the results of the present section provide an alternative proof of the first order MOTM behavior~\eqref{eq:Heston c}
of Heston call prices.

Heuristically, Assumption~\ref{ass:ge} can be derived from the density asymptotics in Assumption~\ref{ass:dens}, which in turn hold in quite general diffusion settings~\cite{DeFrJaVi14a,DeFrJaVi14b}:
\begin{align}
  M(t^{\beta-1}p,t) &= \int e^{t^{\beta-1}p x}q(x,t)dx \notag \\
  &\approx \int \exp\Big( t^{\beta-1}p x - \frac{\Lambda(x)}{t} \Big) dx \label{eq:M1} \\
  &\approx \int \exp\Big( t^{\beta-1}p x - \frac{\Lambda''(0)x^2}{2t} \Big) dx  \label{eq:M2} \\
  &=\exp\Big(  t^{\beta-1}p x - \frac{x^2}{2\sigma_0^2t}
      \Big|_{x= \sigma_0^2 p t^\beta} (1+o(1)) \Big) \label{eq:M Lap} \\
  &= \exp \Big( \tfrac12 \sigma_0^2 p^2  t^{2\beta-1}\big(1+o(1)\big) \Big),
    \quad t\downarrow0. \label{eq:M appr}
\end{align}
In~\eqref{eq:M1}, we ignored that the density expansion~\eqref{eq:dens} might not be valid globally in space;
this might be made rigorous by estimating $q(x,t)$ by a Freidlin-Wentzell LD argument
for~$x$ sufficiently large. As for~\eqref{eq:M2}, we can expect concentration near $x\approx0$,
since $\Lambda(x)$ increases with~$|x|$. Finally,
\eqref{eq:M Lap}, and thus~\eqref{eq:M appr}, follows from a (rigorous) application of the Laplace method.
If~\eqref{eq:M appr} is correct, then~\eqref{eq:ass md} clearly follows.

The critical moment of~$S_t$ is defined by
\[
   p_+(t) := \sup\{ p \geq 0: M( p,t)<\infty\}.
\]
It is obvious that
\begin{equation}\label{eq:nec}
  \lim_{t\downarrow0} t^{1-\beta} p_+(t)=\infty 
\end{equation}
is necessary for~\eqref{eq:ass md}, i.e.,
$p_+(t)$ must grow faster than $t^{\beta-1}$ as $t\downarrow0$.
In the Heston model, e.g., the critical moment is of order $p_+(t)
\approx 1/t \gg t^{\beta-1}$ for small~$t$, as follows from inverting~(6.2) in~\cite{KR11}.
On the other hand, we do not expect our results to be of much use in the presence of jumps. Indeed, suppose
that~\eqref{eq:mgf} is the mgf of an exponential L\'evy model. Then $p_+(t)\equiv p_+$
does not depend on~$t$, and is finite for most models used in practice. Therefore,
\eqref{eq:nec} cannot hold, and so Assumption~\ref{ass:ge} is not satisfied.
 The Merton jump diffusion model is one of the
few L\'evy models of interest that have $p_+=\infty$, but it is easy to
check that it does not satisfy~\eqref{eq:ass md}, either.

After this discussion of Assumption~\ref{ass:ge},
we now give an asymptotic estimate for the distribution function of~$X_t$ (put differently, MOTM
\emph{digital call} prices) in Theorem~\ref{thm:ge}. Then we translate this result to MOTM \emph{call}
prices in Theorem~\ref{thm:transfer}.
If desired, higher order terms in~\eqref{eq:ass md} will give refined asymptotics
in Theorem~\ref{thm:ge}, using Gulisashvili and
Teichmann's recent refinement of the G\"artner-Ellis theorem~\cite{GuTe15}.
 It might not be trivial to translate
the resulting expansions into call price asymptotics, though.
For other asymptotic results on option prices using the G\"artner-Ellis theorem,
see, e.g, \cite{FoJa09,FoJa11}.
\begin{theorem}\label{thm:ge}
  Under Assumption~\ref{ass:ge} (and without any further assumptions on our model),
  for $k_t = \theta t^{\beta}$ with $\beta\in(0,\tfrac12)$ and $\theta>0$, we have a first order MD estimate
  for the cdf of~$X_t$:
  \begin{equation}\label{eq:md1}
    \mathbb{P}[X_t \geq k_t] = \exp\left(-\frac{1}{2\sigma_0^2} \frac{k_t^2}{t}
    \big(1+o(1)\big)\right), \quad t\downarrow0.
  \end{equation}
\end{theorem}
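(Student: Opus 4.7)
The plan is to apply the G\"artner-Ellis theorem to the suitably rescaled log-price, with Assumption~\ref{ass:ge} playing exactly the role of the scaled cumulant generating function hypothesis. Set $Y_t := X_t/t^\beta$ and speed $a_t := t^{2\beta-1}$; since $\beta\in(0,\tfrac12)$ one has $a_t\to\infty$ as $t\downarrow 0$, and the target event rewrites as $\{X_t\geq k_t\} = \{Y_t\geq\theta\}$. The scaled CGF of $Y_t$ evaluates as
\begin{equation*}
  \frac{1}{a_t}\log\mathbb{E}\bigl[e^{a_t\lambda Y_t}\bigr]
  \;=\; t^{1-2\beta}\log M\bigl(\lambda t^{\beta-1},t\bigr)
  \;\xrightarrow[t\downarrow 0]{}\; \tfrac12\sigma_0^2\lambda^2 =: \Lambda(\lambda),
\end{equation*}
for every $\lambda\in\mathbb{R}$, as an immediate consequence of~\eqref{eq:ass md} with $p=\lambda$.

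Next, the limit $\Lambda$ is finite and $C^\infty$ on all of $\mathbb{R}$, hence lower semicontinuous and (trivially) essentially smooth in the sense of~\cite{DeZe98}. The G\"artner-Ellis theorem (Theorem~2.3.6 in~\cite{DeZe98}) therefore furnishes a full large deviation principle for $(Y_t)$ at speed $a_t$, with good rate function given by the Legendre transform
\begin{equation*}
  \Lambda^*(x) \;=\; \sup_{\lambda\in\mathbb{R}}\bigl\{\lambda x - \tfrac12\sigma_0^2\lambda^2\bigr\}
  \;=\; \frac{x^2}{2\sigma_0^2}.
\end{equation*}

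Finally, one applies the upper LDP bound to the closed half-line $[\theta,\infty)$ and the lower LDP bound to its interior $(\theta,\infty)$. Since $\Lambda^*$ is continuous and strictly increasing on $[0,\infty)$, both infima agree: $\inf_{x\geq\theta}\Lambda^*(x)=\inf_{x>\theta}\Lambda^*(x)=\theta^2/(2\sigma_0^2)$. Combined with $\mathbb{P}[Y_t\geq\theta]\geq\mathbb{P}[Y_t>\theta]$ to bridge open and closed events, one obtains
\begin{equation*}
  \lim_{t\downarrow 0}\, t^{1-2\beta}\log\mathbb{P}[X_t\geq\theta t^\beta]
  \;=\; -\frac{\theta^2}{2\sigma_0^2},
\end{equation*}
which is exactly~\eqref{eq:md1} after rewriting $\theta^2 t^{2\beta-1}=k_t^2/t$. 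There is essentially no real obstacle: Assumption~\ref{ass:ge} is tailored precisely to meet the G\"artner-Ellis hypothesis at speed $t^{2\beta-1}$, and the quadratic limit makes both the Legendre computation and the matching of open/closed infima trivial. The only point requiring care is that \eqref{eq:ass md} is assumed to hold for \emph{every} $\beta\in(0,\tfrac12)$ and every $p\in\mathbb{R}$ --- which is what guarantees that the G\"artner-Ellis hypothesis is met on all of $\mathbb{R}$ (not just on an interval) for the specific $\beta$ appearing in the statement.
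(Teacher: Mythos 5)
Your proof is correct and follows essentially the same route as the paper's: rescale $X_t$ by $t^{\beta}$, identify the limiting scaled cumulant generating function from Assumption~\ref{ass:ge}, invoke the G\"artner--Ellis theorem, compute the (quadratic) Legendre transform, and combine the LDP bounds on $(\theta,\infty)$ and $[\theta,\infty)$. The only cosmetic difference is your reciprocal convention for the speed $a_t$ (you take $a_t=t^{2\beta-1}\to\infty$ where the paper sets $a_t=t^{1-2\beta}\to 0$), which does not affect anything.
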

\begin{proof}
  Define
  \[
    Z_t := t^{-\beta} X_t, \quad \text{with mgf} \quad M_Z(s,t) = \mathbb{E}[e^{sZ_t}],
  \]
  and
  \[
    a_t := t^{1-2\beta} = o(1), \quad t\downarrow 0.
  \]
  Then~\eqref{eq:ass md} is equivalent to
  \[
    \Gamma_Z(p) := \lim_{t\downarrow0} a_t \log M_Z(p/a_t,t) 
       = \tfrac12 \sigma_0^2 p^2,
    \quad p\in\mathbb{R}.
  \]
  Since $\Gamma_Z$ is finite on~$\mathbb{R}$, the G\"artner-Ellis theorem (Theorem~2.3.6
  in~\cite{DeZe98}) implies that
  $(Z_t)_{t\geq0}$ satisfies an LDP (large deviation principle)
  as $t\downarrow0$, with rate~$a_t$ and good rate
  function~$\Lambda_Z$, the Legendre transform of $\Gamma_Z$. Trivially, $\Lambda_Z$ is quadratic, too:
  \begin{align*}
    \Lambda_Z(x) &= \sup_{p \in\mathbb{R}}(p x - \Gamma_Z(p)) \\
      &= \sup_{p \in\mathbb{R}}(p x - \tfrac12 \sigma_0^2 p^2)
      = \frac{x^2}{2\sigma_0^2}, \quad x\in\mathbb{R}.
  \end{align*}
  Now fix $\theta>0$. Applying the lower estimate of the LDP to $(\theta,\infty)$ yields
  \begin{align*}
    \liminf_{t\downarrow0}a_t \log \mathbb{P}[Z_t \geq \theta] &\geq
    \liminf_{t\downarrow0}a_t \log \mathbb{P}[Z_t > \theta] \\
     &\geq -\Lambda_Z(\theta) =  -\frac{\theta^2}{2\sigma_0^2},
  \end{align*}
  and applying the upper estimate to $[\theta,\infty)$ yields
  \[
     \limsup_{t\downarrow0}a_t \log \mathbb{P}[Z_t \geq \theta] \leq -\frac{\theta^2}{2\sigma_0^2},
  \]
  and so
  \[
    \lim_{t\downarrow0}a_t \log \mathbb{P}[Z_t \geq \theta] = -\frac{\theta^2}{2\sigma_0^2}.
  \]
  This is the same as~\eqref{eq:md1}.
\end{proof}
\medskip

As in the LD/OTM regime, first order cdf asymptotics translate readily into call price asymptotics.
The proof of the following result is similar to~\cite{Ph10}, p.~30f (concerning
the LD regime) and~\cite{CaCo14}, Theorem~1.5.
In the MD/MOTM regime, one can replace
the condition~(1.19) of~\cite{CaCo14} by a mild condition on the moments of the model.

\begin{theorem}\label{thm:transfer}
  Let $S=e^X$ be a continuous positive martingale. Assume that,  for all $p\geq1$, its $p$-th moment
  explodes at a positive time (infinity included).
  By this we mean that there is a positive $t^*(p)$ such that the mgf
  $\,\mathbb{E}[\exp(pX_t)]$ is finite for all $t\in[0,t^*(p)]$. Let $v_0=\sigma_0^2>0$.
  Then the following are equivalent:
  \begin{itemize}
  \item[(i)] For $k_t=\ell(t) t^{\beta}$, with $\beta\in(0,\tfrac12)$
  and $\ell>0$ slowly varying at zero, it holds that
  \[
     \mathbb{P}[X_t \geq k_t] = \exp\left(-\frac{1}{2v_0} \frac{k_t^2}{t}
    \big(1+o(1)\big)\right), \quad t\downarrow0.
  \]
  \item[(ii)] Under the assumptions of~(i), we have
  \begin{equation}\label{eq:transfer call}
    c(k_t,t) = \exp\left(-\frac{1}{2v_0}  \frac{k_t^2}{t}
    \big(1+o(1)\big)\right), \quad t\downarrow0.
  \end{equation}
  \end{itemize}
\end{theorem}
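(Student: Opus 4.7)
The plan is to prove the two implications by standard sandwich arguments that compare the call payoff $(S_t - e^{k_t})^+$ with the indicator $\mathbf{1}_{X_t \geq k_t}$, combined with H\"older's inequality to handle the factor $S_t$. The crucial observation that makes everything go through in the MOTM scaling is that $k_t = \ell(t)t^\beta \to 0$ while $k_t^2/t = \ell(t)^2 t^{2\beta-1} \to \infty$ polynomially (since $\beta < 1/2$); by Proposition~1.3.6(i) in~\cite{BiGoTe87}, any term of order $\log k_t$, $\log t$, or $\log \ell(t)$ is then $o(k_t^2/t)$ and thus absorbed into the $1+o(1)$ factor. The moment assumption enters to ensure that $\mathbb{E}[e^{pX_t}] = O(1)$ uniformly for small $t$ and any fixed $p\geq 1$, so that its logarithm is also negligible against $k_t^2/t$.

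For the implication (i)$\Rightarrow$(ii), I would prove matching upper and lower bounds for $-\log c(k_t,t)/(k_t^2/t)$. For the \emph{upper bound} on the call, write $(S_t - e^{k_t})^+ \leq S_t\,\mathbf{1}_{X_t \geq k_t}$ and apply H\"older with conjugate exponents $p,q>1$:
\[
  c(k_t,t) \leq \mathbb{E}[S_t^p]^{1/p}\, \mathbb{P}[X_t \geq k_t]^{1/q}.
\]
The moment hypothesis gives $\log \mathbb{E}[S_t^p] = O(1)$, and applying (i) then taking $q \downarrow 1$ yields $\liminf_{t\downarrow 0}\, t\,k_t^{-2}(-\log c(k_t,t)) \geq 1/(2v_0)$. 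For the \emph{lower bound}, pick $\delta \in (0,1)$ and use the trivial inequality
\[
  c(k_t,t) \geq \bigl(e^{(1+\delta)k_t} - e^{k_t}\bigr)\, \mathbb{P}[X_t \geq (1+\delta)k_t] \sim \delta k_t \cdot \mathbb{P}[X_t \geq (1+\delta)k_t];
\]
invoking (i) at the shifted log-strike $(1+\delta)k_t$, taking logs and dividing by $k_t^2/t$ gives $\limsup \leq (1+\delta)^2/(2v_0)$. Sending $\delta \downarrow 0$ closes the sandwich.

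The implication (ii)$\Rightarrow$(i) is symmetric. For the upper bound on the tail, apply the same shift argument in reverse:
\[
  c\bigl((1-\delta)k_t,t\bigr) \geq \bigl(e^{k_t}-e^{(1-\delta)k_t}\bigr)\, \mathbb{P}[X_t \geq k_t] \sim \delta k_t\, \mathbb{P}[X_t \geq k_t],
\]
so $\mathbb{P}[X_t \geq k_t] \leq c((1-\delta)k_t,t)/(\delta k_t (1+o(1)))$; (ii) at the shifted strike and $\delta\downarrow 0$ give $\liminf(-\log \mathbb{P})/(k_t^2/t) \geq 1/(2v_0)$. For the lower bound on the tail, H\"older applied to $c(k_t,t) \leq \mathbb{E}[S_t\mathbf{1}_{X_t \geq k_t}] \leq \mathbb{E}[S_t^p]^{1/p}\mathbb{P}[X_t \geq k_t]^{1/q}$ gives
\[
  \mathbb{P}[X_t \geq k_t] \geq \bigl(c(k_t,t)/\mathbb{E}[S_t^p]^{1/p}\bigr)^q,
\]
and the moment condition with $q\downarrow 1$ concludes.

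The only real obstacle is the bookkeeping to verify that all boundary corrections (the $\log(\delta k_t)$ from the payoff gap, the $\log \mathbb{E}[S_t^p]$ from H\"older, and the shift $k_t \mapsto (1\pm\delta)k_t$ composed with the slowly varying $\ell$) are $o(k_t^2/t)$; this is routine given that $k_t^2/t$ blows up polynomially and $\ell$ is slowly varying, but it must be spelled out to justify letting the auxiliary parameters $\delta$ and $q-1$ tend to zero after taking the limit in $t$.
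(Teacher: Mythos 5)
Your proposal is correct and follows essentially the same route as the paper: Hölder's inequality for the upper call-price bound, a strike-shift sandwich for the lower bound, and a symmetric argument for (ii)$\Rightarrow$(i). The only detail you gloss over is why $\mathbb{E}[S_t^p]=O(1)$ as $t\downarrow 0$ under the stated moment hypothesis; the paper derives this via Doob's maximal inequality applied to $\bar S_{t^*(p+1)}$ plus dominated convergence, but in fact the shorter observation that $t\mapsto\mathbb{E}[S_t^p]$ is nondecreasing on $[0,t^*(p)]$ (since $S^p$ is a submartingale for $p\geq1$) already yields the required boundedness.
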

\begin{proof}
  First assume~(i).
  Let $\varepsilon>0$ and define $\tilde{k}_t=(1+\varepsilon)k_t$. Then
  \begin{align}
    c(k_t,t) &\geq \mathbb{E}[(e^{X_t}-e^{k_t})^+\,
    \mathbf{1}_{\{X_t \geq \tilde{k}_t\}}] \notag \\
    &\geq (e^{\tilde{k}_t}-e^{k_t})^+ \mathbb{P}[X_t \geq \tilde{k}_t]. \label{eq:proof1}
  \end{align}
  The first factor is
  \[
    (e^{\tilde{k}_t}-e^{k_t})^+ = (\tilde{k}_t-k_t + O(k_t^2))^+ = \varepsilon k_t +O(k_t^2).
  \]
  For the second factor in~\eqref{eq:proof1}, we apply~(i) with $\tilde{k}_t$:
  \[
    \lim_{t\downarrow0} \frac{t}{\tilde{k}_t^2} \log \mathbb{P}[X_t \geq \tilde{k}_t]
      = - \frac{1}{2v_0}.
  \]
  Therefore,
  \begin{align*}
    \liminf_{t\downarrow0} \frac{t}{k_t^2} \log c(k_t,t) &\geq \lim_{t\downarrow0}
      \frac{t}{k_t^2}\Big(- \frac{1}{2v_0} \frac{\tilde{k}_t^2}{t} \big(1+o(1)\big)  \Big) \\
    &= -\frac{(1+\varepsilon)^2}{2v_0}.
  \end{align*}
  Now let $\varepsilon\downarrow0$ to get the desired lower bound for $c(k_t,t)$.
  
  As for the upper bound, we let $p>1$ and note that, by definition of~$p\mapsto t^*(p)$,
  we have 
  $\mathbb{E}[S_{t}^{p+1}]<\infty$ for all $t\in[0,t^*(p+1)]$. Define
  $\bar{S}_t=\sup_{0\leq u\leq t} S_u$ for $t\geq0$. By Doob's inequality
  (Theorem~3.8 in~\cite{KaSh91}), we have
  \[
    \mathbb{P}[\bar{S}_{t^*(p+1)} \geq s] \leq \frac{\mathbb{E}[S_{t^*(p+1)}^{p+1}]}{s^{p+1}}, \quad s>0.
  \]
  Hence $\bar{S}_{t^*(p+1)}$ has a finite $p\kern .08em$th moment:
  \[
     \mathbb{E}[(\bar{S}_{t^*(p+1)})^p] = p \int_0^\infty s^{p-1} 
     \mathbb{P}[\bar{S}_{t^*(p+1)} \geq s] ds <\infty.
  \]
  By the dominated convergence theorem and the continuity of~$S$, we thus conclude
  \begin{equation}\label{eq:proof2}
    \lim_{t\downarrow0} \mathbb{E}[S_t^p] = S_0^p.
  \end{equation}
  Now let $1/p+1/q=1$ and apply H\"older's inequality:
  \begin{align*}
    c(k_t,t) &= \mathbb{E}[(e^{X_t}-e^{k_t})^+\, \mathbf{1}_{\{X_t\geq k_t\}}] \\
    &\leq \mathbb{E}[((e^{X_t}-e^{k_t})^{+})^p]^{1/p}\ \mathbb{P}[X_t\geq k_t]^{1/q} \\
    &\leq \mathbb{E}[S_t^p]^{1/p}\ \mathbb{P}[X_t\geq k_t]^{1/q}.
  \end{align*}
   By~\eqref{eq:proof2} and (i), we obtain
   \[
     \limsup_{t\downarrow0} \frac{t}{k_t^2} \log c(k_t,t) \leq
     \frac1q \limsup_{t\downarrow0} \frac{t}{k_t^2} \log \mathbb{P}[X_t\geq k_t]
     = -\frac{1}{2 q v_0}.
   \]
   Now let $p\uparrow\infty$, i.e., $q\downarrow1$.
   The same argument yields the lower bound of the implication
   (ii) $\Longrightarrow$ (i). The remaining upper bound of (ii) $\Longrightarrow$ (i) is shown very similarly
   to the lower bound of the implication
   (i) $\Longrightarrow$ (ii).
\end{proof}
\medskip

\bibliographystyle{siam}
\bibliography{gerhold}

\end{document}